\pdfoutput=1

\documentclass[12pt]{article}
\usepackage{amsmath}
\usepackage{graphicx,psfrag,epsf}
\usepackage{enumerate}
\usepackage{natbib}
\usepackage{url} 

\newcommand{\blind}{1}


\usepackage[left=1.1in,top=1.40in,right=1.1in,bottom=1.6in,head=1.40in]{geometry}

\usepackage[T1]{fontenc}
\usepackage[utf8]{inputenc}
\usepackage{amsmath,amssymb}




\usepackage{amsmath}
\usepackage{amssymb}
\usepackage{graphicx}
\usepackage[toc,page]{appendix}
\usepackage{hyperref}
\usepackage{subcaption}
\usepackage{amsthm}
\usepackage{enumitem}
\usepackage{color}
\usepackage{colortbl}
\usepackage{multirow}
\usepackage{float}
\usepackage{natbib}
\usepackage[english]{babel}

\setcitestyle{aysep={}}

\newtheorem{theorem}{Theorem}
\newtheorem{lemma}[theorem]{Lemma}
\newtheorem{corollary}[theorem]{Corollary}
\newtheorem{example}[theorem]{Example}
\newtheorem{definition}[theorem]{Definition}
\newtheorem{exam}{Example}[section]

\hypersetup{
    colorlinks,
    citecolor=black,
    filecolor=black,
    linkcolor=black,
    urlcolor=black
}

\newcommand{\E}{\mathbb{E}}
\newcommand{\Var}{\mathrm{Var}}
\newcommand{\Cov}{\mathrm{Cov}}
\newcommand{\argmin}{\mathrm{argmin}}

\newcommand{\ybold}{\boldsymbol{y}}
\newcommand{\ystar}{\boldsymbol{y^{*}}}
\newcommand{\ytrbold}{\boldsymbol{y^{'}}}
\newcommand{\ytrstar}{\boldsymbol{y^{*'}}}
\newcommand{\ynew}{\boldsymbol{y^{new}}}

\newcommand{\fhatbold}{\boldsymbol{\hat{f}}}
\newcommand{\fhatstar}{\boldsymbol{\hat{f}^*}}
\newcommand{\betabold}{\boldsymbol{\beta}}

\newcommand{\betahatbold}{\boldsymbol{\hat{\beta}}}
\newcommand{\betahatstar}{\boldsymbol{\hat{\beta}^*}}
\newcommand{\mubold}{\boldsymbol{\mu}}
\newcommand{\mustar}{\boldsymbol{\mu^{*}}}
\newcommand{\mutrbold}{\boldsymbol{\mu^{'}}}
\newcommand{\mustartr}{\boldsymbol{\mu^{*'}}}

\newcommand{\abold}{\boldsymbol{a}}
\newcommand{\atrbold}{\boldsymbol{a^{'}}}
\newcommand{\bonebold}{\boldsymbol{b_{1}}}
\newcommand{\btwobold}{\boldsymbol{b_{2}}}
\newcommand{\btwostarbold}{\boldsymbol{b^{*}_{2}}}
\newcommand{\epsilonbold}{\boldsymbol{\epsilon}}
\newcommand{\epsilonstarbold}{\boldsymbol{\epsilon^{*}}}



\begin{document}

\def\spacingset#1{\renewcommand{\baselinestretch}%
{#1}\small\normalsize} \spacingset{1}


\if1\blind
{
  \title{\bf Assessing Prediction Error at Interpolation and Extrapolation Points}
  \author{Assaf Rabinowicz \thanks{
    The authors gratefully acknowledge \textit{Israeli Science Foundation, grant 1804/16}}\hspace{.2cm}\\
    Department of Statistics, Tel-Aviv University, Tel-Aviv, Israel, 69978\\
    \\
    Saharon Rosset\\
    Department of Statistics, Tel-Aviv University, Tel-Aviv, Israel, 69978}
  \maketitle
} \fi

\if0\blind
{
  \bigskip
  \bigskip
  \bigskip
  \begin{center}
    {\LARGE\bf Assessing Prediction Error at Interpolation and Extrapolation Points}
\end{center}
  \medskip
} \fi

\bigskip
\begin{abstract}
Common model selection criteria, such as $AIC$ and its variants, are based on in-sample prediction error estimators. However, in many applications involving predicting at interpolation and extrapolation points, in-sample error cannot be used for estimating the prediction error. In this paper new prediction error estimators, $tAI$ and $Loss(w_{t})$ are introduced. These estimators generalize previous error estimators, however are also applicable for assessing prediction error in cases involving interpolation and extrapolation. Based on the prediction error estimators, two model selection criteria with the same spirit as $AIC$ are suggested. The advantages of our suggested methods are demonstrated in simulation and real data analysis of studies involving interpolation and extrapolation in a linear mixed model framework.
\end{abstract}

\noindent%
{\it Keywords: Model selection; $AIC$; Linear Mixed Model; Kriging}
\vfill

\newpage
\spacingset{1.45} 


\section{Introduction}
Predicting a phenomenon at different points than the points appearing in the training sample plays an important role across many research fields such as in Geostatistics \citep{li2014spatial,kyriakidis1999geostatistical}, health \citep{manton2012forecasting} and Econometrics \citep{baltagi2008forecasting}. In many of these use cases the new predicted points are interpolation or extrapolation points with respect to space or to time. For example, \cite{brown2002spatial} interpolated climate values in Southwestern U.S., where the coverage of climate information is sparse. By predicting at interpolation points, they created a high-resolution map of seasonal temperature and precipitation in this area. 
Another example given by \cite{stewart2009forecasting} is forecasting the  
effects of obesity and smoking on U.S. life expectancy in 2020 by using a data set for the years 2003 through 2006. 

Modeling approaches involving prediction at interpolation and extrapolation points were studied in Machine Learning, mainly in the context of transductive Support Vector Machine \citep{joachims1999transductive}, however also in regression \citep{le2006transductive}.

Assessing prediction error at interpolation and extrapolation points, or more generally at transduction points, cannot be done using traditional in-sample prediction error estimators as is used in $AIC$ \citep{akaike1974new} and its variants. Similarly, K-fold Cross-Validation, which estimates the generalization error, is also unsuitable in these cases, where prediction points are specified. 

This paper introduces a prediction error estimator, $tAI,$ which generalizes previous in-sample prediction error estimators like $mAI$ \citep{vaida2005conditional} and $cAI$ \citep{vaida2005conditional}, however, it doesn't assume that the predicted points are the same as the points appearing in the training sample and therefore is applicable to a wider range of use cases, such as cases involving prediction at interpolation and extrapolation points. Since prediction error assessment is highly related to model selection, a new model selection criterion, $tAIC,$ which is based on $tAI,$ is proposed as well. $tAI$ is suitable when the observations are normally distributed, whether they are correlated or not and therefore is applicable for various parametric models with different variance structure assumptions such as Linear Mixed Model (LMM), Gaussian Process Regression (GPR), Generalized Least Squares (GLS) and Linear Regression. Relaxing the normality requirement of $tAI$, we also propose in Section \ref{Optimism section} an approach for inference on interpolation and extrapolation that is based on squared error loss rather than likelihood, and hence generalizes the Optimism approach in model selection \citep{efron1986biased}.

In many use cases involving predicting at interpolation and extrapolation points, the dependent variable has a correlation structure \citep{li2014spatial,kyriakidis1999geostatistical}. For example, in the use case that is given by \cite{brown2002spatial}, it is natural to assume a spatial correlation structure on the Southwestern U.S. area. Similarly, in repeated measures studies that forecast long-term treatment effects, a correlation structure with respect to time is commonly assumed \citep{ho2011long}. Therefore, use cases involving correlated data and models that are implemented on correlated data, such as LMM, GPR and GLS, are good platforms for analyzing how predicting at interpolation and extrapolation points influences prediction error estimation and model selection. Before introducing $tAI,$ a setup which puts LMM, GPR and GLS under a unified framework, will be defined:

Let $\ybold\in\mathbb{R}^{n}$ and the fixed matrices $\{X\in\mathbb{R}^{n\times p},\,Z\in\mathbb{R}^{n\times q}\}$ be a training sample, $\ystar\in\mathbb{R}^{n^*}$ and the fixed matrices $\{X^*\in\mathbb{R}^{n^*\times p},\, Z^*\in\mathbb{R}^{n^*\times q}\}$ be a prediction set, where
\begin{align}
\ybold\sim N(\mubold,V),\,\ystar\sim N(\mustar,V^*),\label{Setup}
\end{align}
$\mubold=X\betabold,\,\mustar=X^*\betabold,$ $V$ is a 
function of $Z$ and $V^*$ is a function of $Z^*.$ For example, in LMM it is typically assumed that the columns of $Z,\,Z^*$ are associated with normally distributed random effects with covariance matrix $G\in\mathbb{R}^{q\times q}$ such that
\begin{align*}
V&=ZGZ^{'}+\sigma^2I_{n},\,\,\sigma\in \mathbb{R}^{+}\\    
V^{*}&=Z^{*}GZ^{*'}+\sigma^2I_{n^*},
\end{align*}
where $I_n,\,I_n^{*}$ are the identity matrices with dimensions $n$ and $n^*$ respectively. In GPR it is often assumed that 
\begin{align*}
V&=K(Z,Z)+\sigma^2I_n\\
V^{*}&=K(Z^{*},Z^{*})+\sigma^2I_n^{*}
\end{align*}
where $K$ is some kernel function.  

In addition denote 
\begin{align*}
R^*&=\Var(\ystar|\ybold)\\
R&=\Var(\ynew|\ybold),
\end{align*}
where $\ynew\sim N(\mubold,V)$ is an i.i.d copy of $\ybold.$

By normality of $\ybold$ and $\ystar,$ 
\[
\E(\ystar|\ybold)=X^{*}\betabold+\Cov(\ystar,\ybold)V^{-1}(\ybold-X\betabold).
\] 
Given $V,\,\Cov(\ystar,\ybold)$ and the ML estimator of $\betabold,$
\[
\betahatbold=(X^{'}V^{-1}X)^{-1}X^{'}V^{-1}\ybold,
\]
$\E(\ystar|\ybold)$ can be used for predicting $\ystar$ as follows \begin{align}
\fhatstar&=\hat{\E}(\ystar|\ybold) \label{Conditional Expectation} \\
&=X^{*}(X^{'}V^{-1}X)^{-1}X^{'}V^{-1}\ybold+\Cov(\ystar,\ybold)V^{-1}\left\{I_{n}-X(X^{'}V^{-1}X)^{-1}X^{'}V^{-1}\right\}\ybold. \nonumber
\end{align}

This procedure generalizes standard prediction practices in LMM, GPR and GLS. In addition, $\fhatstar$ is the Best Linear Unbiased Predictor (BLUP) \citep{harville1976extension}.

$tAI$ is an estimator of the following prediction error,
\begin{align}
-\frac{1}{n^{*}}\E_{\ystar|\ybold}l(\ystar)=-\frac{1}{n^{*}}\E_{\ystar|\ybold}\log\left[\frac{\exp{\left\{\frac{-1}{2}(\ystar-\fhatstar)^{'}R^{*^{-1}}(\ystar-\fhatstar)\right\}}}{\sqrt{(2\pi)^{n^*}|R^*|}}\right].\label{Transductive likelihood}
\end{align}
Correspondingly, given a set of candidate models, $tAIC$ would be defined as a model selection criterion selecting a model with the minimal $tAI.$ This methodology of estimating the prediction errors for different models and then selecting the model with the minimal prediction error, is the same as is implemented in $AIC$ and its variants.

$\{X^{*},Z^{*},R^*\}=\{X,Z,R\}$ is not assumed in the setup above and in its associated prediction error measure, eq. (\ref{Transductive likelihood}). Therefore, $tAI$ is applicable in various use cases that require flexibility in defining $\{X^{*},Z^{*},R^*\}.$ For example, in the use case mentioned above of \cite{brown2002spatial}, 
where GPR is used for predicting interpolated climate values (Kriging), it is reasonable to define $\{X^{*},Z^{*}\}$ as the data points at the high-resolution spatial array rather than as the data points at the training sample, $\{X,Z\},$ which cover the area sparsely. Therefore, while prediction error estimators that are based on in-sample error estimation and generalization error estimation are unsuitable to this case, $tAI$ is suitable. For similar considerations, $tAIC$ is required in repeated measures studies in health and Biomedicine, when the main interest is to select LMM model minimizing the prediction error at long-term points, $\{X^*,Z^*,R^{*}\}$, which are different than the points that are used for model building, $\{X,Z,R\}$ \citep{pope2002lung,li2008long}. 

Beside downscaling of climate maps and estimating long-term effect in clinical studies, interpolation and extrapolation using LMM and
Kriging are important tools for many research topics in mining engineering,
agriculture, environmental sciences, especially when sampling is difficult
and expensive like in mountainous and deep marine regions \citep{li2011review,stahl2006comparison,vicente2003comparative}. $tAI$ and $tAIC$ are relevant for
all these research topics as well as for others which don't involve interpolation and extrapolation but still don't satisfy $\{X^{*},Z^{*},R^{*}\}=\{X,Z,R\}.$
Various use cases will be presented and analyzed in Sections \ref{Use cases} and \ref{Numerical part}.

\section{$tAI$ and $tAIC$} \label{tAIC} 
$tAI$ is derived by estimating $-\E_{\ystar|\ybold}l(\ystar)/n^{*}$
by the averaged log-likelihood of the training sample,
\[
-\frac{1}{n}l(\ybold)=-\frac{1}{n}\log\left[\frac{\exp{\left\{\frac{-1}{2}(\ybold-\fhatbold)^{'}R^{-1}(\ybold-\fhatbold)\right\}}}{\sqrt{(2\pi)^{n}|R|}}\right],
\]
plus a penalty correction
\[
C_{tAI}=\E_{\ybold}\left[-\frac{1}{n^{*}}\E_{\ystar|\ybold}l(\ystar)-\left\{-\frac{1}{n}l(\ybold)\right\}\right],
\]
where 
\[
\fhatbold=X\betahatbold+(V-R)V^{-1}(\ybold-X\betahatbold)
\]
is the estimated conditional expectation, $\hat{\E}(\ystar|\ybold),$ when $\{X^*,Z^*,R^*\}=\{X,Z,R\}.$
This approach of estimating prediction error by deriving the bias of the training error is also used in $AIC$ and its variants \citep{akaike1974new}. Consequently, the estimator
\[
tAI=-\frac{1}{n}l(\ybold)+C_{tAI}
\]
doesn't contain $\ystar$ but still satisfies
\[
E_{\ybold}tAI=\E_{\ybold}\left\{-\frac{1}{n^{*}}\E_{\ystar|\ybold}l(\ystar)\right\},
\]
and $tAI$ can be seen either as an estimator of 
$-\E_{\ystar|\ybold}l(\ystar)/n^{*}$ or of its expectation  $-\E_{\ybold}\E_{\ystar|\ybold}l(\ystar)/n^{*}=-\E_{\ybold,\ystar}l(\ystar)/n^{*}.$ 

The following theorem and corollary introduce a general expression for $C_{tAI}$ and therefore also for $tAI$ and $tAIC$.

\begin{theorem} \label{Main Theorem}
Consider the setup given in eq. (\ref{Setup}). In addition, let $H\ybold\in\mathbb{R}^{n}$ and $H^{*}\ybold\in\mathbb{R}^{n^{*}}$ be predictors of $\ybold$ and $\ystar$ respectively, where $H$ and $H^*$ don't contain $\ybold,\,\ystar$ and satisfy $H\mubold=\mubold,\,H^{*}\mubold=\mustar.$ Then 
\begin{align*}
\E_{\ybold}\left\{-\frac{1}{n^{*}}\E_{\ystar|\ybold}l(\ystar)+\frac{1}{n}l(\ybold)\right\}=&\frac{1}{n}tr\left(R^{-1}HV\right)-\frac{1}{n^*}tr\left(R^{*^{-1}}H^{*}\Cov(\ybold,\ystar)\right)\\
 &+\frac{1}{2}\left\{\log\left(\frac{|R^{*}|^{\frac{1}{n^*}}}{|R|^{\frac{1}{n}}}\right)+\frac{1}{n^*}tr\left(R^{*^{-1}}V^{*}\right)-\frac{1}{n}tr\left(R^{-1}V\right)\right\}\\
 &+\frac{1}{2}\left\{\frac{1}{n^*}tr\left(R^{*^{-1}}H^{*}VH^{*'}\right)-\frac{1}{n}tr\left(R^{-1}HVH^{'}\right)\right\}.
\end{align*}
\end{theorem}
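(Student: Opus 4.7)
The plan is to substitute $H\ybold$ and $H^{*}\ybold$ for the predictors in $l(\ybold)$ and $l(\ystar)$ and then expand using standard Gaussian second-moment identities. Writing out the densities gives
\[
-l(\ybold)=\tfrac{1}{2}(\ybold-H\ybold)^{'}R^{-1}(\ybold-H\ybold)+\tfrac{n}{2}\log(2\pi)+\tfrac{1}{2}\log|R|,
\]
and analogously for $l(\ystar)$ with $H^{*}\ybold$ and $R^{*}$ in place of $H\ybold$ and $R$. After dividing by $n$ and $n^{*}$ and taking the indicated expectations, the $\log(2\pi)$ terms cancel and the log-determinants assemble into $\tfrac{1}{2}\log(|R^{*}|^{1/n^{*}}/|R|^{1/n})$; the remaining work is to evaluate the two expected quadratic forms via $\E(u^{'}Au)=\mathrm{tr}(A\,\Var u)+(\E u)^{'}A(\E u)$.

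For the in-sample piece, let $u=(I_{n}-H)\ybold$. The hypothesis $H\mubold=\mubold$ gives $\E u=0$, so $\E(u^{'}R^{-1}u)=\mathrm{tr}(R^{-1}(I_n-H)V(I_n-H)^{'})$. Expanding and using symmetry of $R$ and $V$ with trace cyclicity to collapse $\mathrm{tr}(R^{-1}VH^{'})=\mathrm{tr}(R^{-1}HV)$ produces $\mathrm{tr}(R^{-1}V)-2\,\mathrm{tr}(R^{-1}HV)+\mathrm{tr}(R^{-1}HVH^{'})$. Dividing by $2n$ and carrying the sign from the $+\tfrac{1}{n}l(\ybold)$ in the target identity yields exactly the terms $\tfrac{1}{n}\mathrm{tr}(R^{-1}HV)$, $-\tfrac{1}{2n}\mathrm{tr}(R^{-1}V)$, and $-\tfrac{1}{2n}\mathrm{tr}(R^{-1}HVH^{'})$ on the right-hand side.

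For the out-of-sample piece, let $v=\ystar-H^{*}\ybold$, which has $\E v=0$ by $H^{*}\mubold=\mustar$, and a direct bilinear computation gives
\[
\Var v=V^{*}-H^{*}\Cov(\ybold,\ystar)-\Cov(\ystar,\ybold)H^{*'}+H^{*}VH^{*'}.
\]
The same quadratic-form identity, together with folding the two symmetric cross-covariance traces into $2\,\mathrm{tr}(R^{*^{-1}}H^{*}\Cov(\ybold,\ystar))$ by cyclicity and symmetry of $R^{*}$, and scaling by $1/(2n^{*})$, produces the contributions $-\tfrac{1}{n^{*}}\mathrm{tr}(R^{*^{-1}}H^{*}\Cov(\ybold,\ystar))$, $\tfrac{1}{2n^{*}}\mathrm{tr}(R^{*^{-1}}V^{*})$ and $\tfrac{1}{2n^{*}}\mathrm{tr}(R^{*^{-1}}H^{*}VH^{*'})$. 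Summing the in-sample, out-of-sample and log-determinant contributions reproduces the three lines of the stated identity.

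The main obstacle is bookkeeping: tracking the six distinct trace terms and recognising that the two apparently different cross-covariance traces in $\Var v$ collapse into a single factor of two after cyclicity and symmetry of $R^{*}$ are invoked. No probabilistic machinery beyond the Gaussian second-moment identity and the two unbiasedness conditions $H\mubold=\mubold$, $H^{*}\mubold=\mustar$ is required, so the argument is essentially algebraic once these simplifications are made.
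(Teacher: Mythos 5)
Your proof is correct and follows essentially the same route as the paper: expand both negative log-likelihoods, cancel the $\log(2\pi)$ terms, collect the log-determinants, and evaluate the two expected quadratic forms by the Gaussian second-moment identity, using symmetry and trace cyclicity to merge the cross-covariance terms. The only cosmetic difference is that you invoke $H\mubold=\mubold$ and $H^{*}\mubold=\mustar$ up front to center the residuals, whereas the paper expands the raw second moments $V+\mubold\mubold^{'}$ first and cancels the mean terms at the end (which lets it record the general biased-case formula as a byproduct); the algebra and signs are otherwise identical.
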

The proof is attached in Appendix \ref{$tAIC$ derivation appendix}.

\begin{corollary} \label{C_{tAI}}
Under the set-up described in eq. (\ref{Setup}), the conditions in Theorem \ref{Main Theorem} are satisfied by the BLUPs $\fhatbold$ and $\fhatstar,$ where
\begin{align*}
\fhatbold & =H\ybold\\
H & =X(X^{'}V^{-1}X)^{-1}X^{'}V^{-1}+(V-R)V^{-1}\left\{I_{n}-X(X^{'}V^{-1}X)^{-1}X^{'}V^{-1}\right\},
\end{align*}
and
\begin{align*}
\fhatstar & =H^{*}\ybold\\
H^{*} & =X^{*}(X^{'}V^{-1}X)^{-1}X^{'}V^{-1}+\Cov(\ystar,\ybold)V^{-1}\left\{I_{n}-X(X^{'}V^{-1}X)^{-1}X^{'}V^{-1}\right\}.
\end{align*}
\end{corollary}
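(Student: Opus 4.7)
The plan is to verify the two hypotheses of Theorem~\ref{Main Theorem} for the specific $H$ and $H^{*}$ given in the corollary, namely that they contain neither $\ybold$ nor $\ystar$, and that $H\mubold=\mubold$ and $H^{*}\mubold=\mustar$. Along the way I would also record the identifications $\fhatbold=H\ybold$ and $\fhatstar=H^{*}\ybold$, which tie the corollary back to the BLUP formula in eq.~(\ref{Conditional Expectation}). This last step is pure bookkeeping: plugging $\betahatbold=(X^{'}V^{-1}X)^{-1}X^{'}V^{-1}\ybold$ into $X\betahatbold+(V-R)V^{-1}(\ybold-X\betahatbold)$ and into $X^{*}\betahatbold+\Cov(\ystar,\ybold)V^{-1}(\ybold-X\betahatbold)$ collects term-by-term the stated expressions for $H\ybold$ and $H^{*}\ybold$. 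The same display makes it transparent that $H$ and $H^{*}$ depend only on the fixed quantities $X, X^{*}, V, R, \Cov(\ystar,\ybold)$ and thus contain no random component.

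The main content is the mean-preservation check. The algebraic identity driving it is $(X^{'}V^{-1}X)^{-1}X^{'}V^{-1}X=I_{p}$, equivalently $\{I_{n}-X(X^{'}V^{-1}X)^{-1}X^{'}V^{-1}\}X=0$, i.e.\ the GLS residual projector annihilates $X$. Right-multiplying $H$ by $\mubold=X\betabold$, the first piece collapses to $X\betabold=\mubold$ while the second piece vanishes, giving $H\mubold=\mubold$. Exactly the same identity applied to $H^{*}X\betabold$ sends the first term to $X^{*}\betabold=\mustar$ and kills the second regardless of the prefactor $\Cov(\ystar,\ybold)V^{-1}$; this is precisely why unbiasedness for the transductive predictor $H^{*}$ follows from the same mechanism as for $H$.

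There is no genuine obstacle here: the whole corollary reduces to a short substitution together with one orthogonality identity. The only care needed is to keep the roles of $\betabold$ (the true parameter, relevant for the mean-preservation hypothesis) and $\betahatbold$ (its ML estimator, relevant for the predictor identification) straight, so that the check $H\mubold=\mubold$ is performed with $X\betabold$ on the right rather than with $\ybold$. Once the hypotheses are confirmed, Theorem~\ref{Main Theorem} applies verbatim and yields $C_{tAI}$ for the BLUPs.
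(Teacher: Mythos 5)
Your proof is correct and is exactly the verification the paper leaves implicit (the corollary is stated without a separate proof): identify $H$ and $H^{*}$ by substituting $\betahatbold$ into the BLUP formulas, note they are built only from fixed quantities, and use $\{I_{n}-X(X^{'}V^{-1}X)^{-1}X^{'}V^{-1}\}X=0$ to get $H\mubold=\mubold$ and $H^{*}\mubold=\mustar$. Nothing is missing.
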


By Corollary \ref{C_{tAI}} and Theorem \ref{Main Theorem}, $C_{tAI}$ can be calculated under the setup that is described in eq. (\ref{Setup}). Therefore, $tAI$ can be implemented in LMM, GPR and other related models.

Beside prediction error estimation, these results can be used for defining the following model selection criterion.
\begin{definition} Given set of models $\mathcal{H},$ satisfying the conditions in Theorem \ref{Main Theorem}, $tAIC$ is the following criterion
\[
h_{best}=\underset{h\in\mathcal{H}}{\argmin}\, tAI_h,
\]
where $tAI_h$ is $tAI$ for model $h.$
\end{definition}

\subsection{Comparison with other prediction error estimators} \label{Comparison with mAIC and cAIC}
The prediction error estimators that appear in $cAIC$ and
$mAIC$ \citep{vaida2005conditional} were developed for normal linear models under different restrictions on the variance structure, but assuming $\{X,Z,R\} = \{X^*,Z^*,R^*\}$. $cAIC$ is aimed at the LMM and GPR case, where $\Cov(\ystar,\ybold) \neq 0$, while $mAIC$ considers the GLS case where  $\Cov(\ystar,\ybold) = 0$. For $cAIC$ the prediction error estimate is: 
\[
cAI=-\frac{1}{n}l(\ybold)+\frac{tr(H)}{n},\,\, \text{\footnotemark}
\]
\footnotetext{\cite{vaida2005conditional} define this prediction error estimator with a factor of $2n,$ i.e., as $2n\times cAI.$ In addition, they denote the prediction error estimator as cAIC. However, here, in order to distinguish between the prediction error estimator and the model selection procedure, the prediction error estimator is denoted as $cAI$ and the criterion as $cAIC.$ Similarly with $mAI$ and $mAIC$.} 
while for $mAIC$ it is: 
\[
mAI=-\frac{1}{n}l(\ybold)+\frac{p}{n}.
\]
$cAIC,\;mAIC$ are defined from $cAI,\;mAI$ similarly to $tAIC.$ 

It is easy to confirm that when $\{X,Z,R\} = \{X^*,Z^*,R^*\}$ our $tAI$ formula indeed reduce to the $cAI$ and $mAI$ formulas.

In addition, for GLS, we can also show an interesting interpretation for the difference between the $mAI$ and $tAI$ expressions. With a little algebra we get:

\begin{align*}
C_{tAI}(GLS)=&\frac{p}{n}+\frac{1}{2}\log\left(\frac{|V^{*}|^{\frac{1}{n^{*}}}}{|V|^{\frac{1}{n}}}\right)\\
&
+\frac{1}{2}tr\left\{(X^{'}V^{-1}X)^{-1}\left(\frac{1}{n^{*}}X^{*'}V^{*^{-1}}X^{*}-\frac{1}{n}X^{'}V^{-1}X\right)\right\}\\
=&\frac{p}{n}+\frac{1}{2}\log\left(\frac{|V^{*}|^{\frac{1}{n^{*}}}}{|V|^{\frac{1}{n}}}\right)
+\frac{1}{2}tr\left[\Var(\betahatbold)\left\{\frac{1}{n^{*}}\Var(\betahatstar)^{-1}-\frac{1}{n}\Var(\betahatbold)^{-1}\right\}\right],
\end{align*}
where 
\[\betahatstar=(X^{'*}V^{*^{-1}}X^{*})^{-1}X^{'*}V^{*^{-1}}\ystar.
\]
Since $\Var(\betahatbold)$ achieves the Cramer-Rao bound:
\begin{align}\label{c_mAI and c_tAI}
C_{tAI}(GLS) & =\frac{p}{n}+\frac{1}{2}\log \left(\frac{|V^{*}|^{\frac{1}{n^{*}}}}{|V|^{\frac{1}{n}}}\right)+\frac{1}{2}tr\left[\mathcal{I}(\betahatbold)^{-1}\left\{\frac{1}{n^{*}}\mathcal{I}(\betahatstar)-\frac{1}{n}\mathcal{I}(\betahatbold)\right\}\right],
\end{align}
where $\mathcal{I}$ is Fisher-information. The determinants $|V|$ and $|V^{*}|$ are often called the generalized variance \citep{wilks1932certain,johnson2014applied}.

\subsection{Relaxing Theorem \ref{Main Theorem} Conditions}
Although this paper focuses on prediction error estimation and model selection for LMM and GPR, Theorem \ref{Main Theorem} is more general and doesn't assume the paradigm applied in LMM and GPR, i.e., predicting using $E(\ystar|\ybold)$ and estimating the marginal mean parameters with MLE. Theorem \ref{Main Theorem} assumes:
\begin{enumerate}
\item Normality of  $\ystar$ and $\ybold.$ 
\item $\E \ybold=X\betabold,\,\E \ystar=X^*\betabold$ 
\item $H\mubold =\mubold,\,H^*\mubold =\mustar$ \label{Bias conditions} 
\end{enumerate}
and therefore can be used in other cases satisfying the above conditions.

When the normality assumption cannot be taken, another model selection criterion, which is based on similar approach as $tAI$ can be implemented. For more details see Section \ref{Optimism section}.

In case the normality assumption can be taken, however the fitted model doesn't satisfy condition \ref{Bias conditions} of unbiasedness, the following extended version of Theorem \ref{Main Theorem} results can be used instead:
\begin{align*}
&\E_{\ybold}\left\{-\frac{1}{n^{*}}\E_{\ystar|\ybold}l(\ystar)+\frac{1}{n}l(\ybold)\right\}\\
&\qquad\qquad=\frac{1}{n}tr\left(R^{-1}HV\right)-\frac{1}{n^*}tr\left(R^{*^{-1}}H^{*}\Cov(\ybold,\ystar)\right)\\
 & \qquad\qquad\,\,\,\,\,\,\,+\frac{1}{2}\left\{\log\left(\frac{|R^{*}|^{\frac{1}{n^*}}}{|R|^{\frac{1}{n}}}\right)+\frac{1}{n^*}tr\left(R^{*^{-1}}V^{*}\right)-\frac{1}{n}tr\left(R^{-1}V\right)\right\} \\
 &\qquad\qquad\,\,\,\,\,\,\,+\frac{1}{2}\left\{\frac{1}{n^*}tr\left(R^{*^{-1}}H^{*}VH^{*'}\right)-\frac{1}{n}tr\left(R^{-1}HVH^{'}\right)\right\}\\
  & \qquad\qquad\,\,\,\,\,\,\,+\frac{1}{2n}tr\left(R^{-1}(2H\mubold\mutrbold-\mubold\mutrbold-H\mubold\mutrbold H^{'})\right)\\
& \qquad\qquad\,\,\,\,\,\,\,\,  -\frac{1}{2n^*}tr\left(R^{*^{-1}}(2H^{*}\mubold\mustartr-\mustar\mustartr-H^{*}\mubold\mutrbold H^{*'})\right).
\end{align*}
The proof can be found in Appendix \ref{$tAIC$ derivation appendix} as part of the proof of Theorem \ref{Main Theorem}.

Note that this expression is less useful, as it depends on $\mubold$ and $\mustar$ which are unknown.

\section{Use cases} \label{Use cases}
In this section, typical use cases of using $tAI$ and $tAIC$ are presented.
\subsection{Predicting interpolation and extrapolation in spatial array and longitudinal temporal data}
As was described in the introduction, predicting interpolated and extrapolated data points using LMM and GPR is common in Biomedicine, health, Climatology and other research fields, where temporal and spatial datasets are common. The flexible definition of $X^*,Z^*,R^*$ and $V^*$ in $tAI$ makes it applicable when the goal is to estimate prediction error at interpolated and extrapolated data points along time and space dimensions.

In Section \ref{Numerical part} we analyze numerically a repeated measures clinical study, containing child growth measurements \citep{potthoff1964generalized}, where interpolation and extrapolation objectives can be defined and application of $tAI$ is demonstrated. The following example, built on the application of \cite{tsanas2010accurate}, demonstrates that appropriate use of $tAIC$ can also simplify and improve on existing methodology.
\begin{exam}\label{Interpolation example}
Tsanas et al. introduced a new method for measuring progression of Parkinson's disease. Their motivation is that the standard methodology for measuring Parkinson progression, which uses UPDRS score (Unified Parkinson's Disease Rating Scale), is costly and requires a physician visit. Their alternative methodology is creating a formula that approximates the UPDRS score with speech signals which are not costly. Six months data was collected for their study, containing large amount of longitudinal speech signal measurements per patient, however, UPDRS scores were collected only at a small number of the time points. In order to select the best covariates with respect to the whole speech signals sample, they suggested to interpolate the UPDRS scores using 'straightforward linear interpolation', then to fit several alternative models and to select one of them using $AIC$ and other model selection criteria. 
An alternative paradigm that doesn't require imputing UPDRS score is by using $tAIC.$ Since $tAIC$ doesn't assume $\{X^{*},Z^{*}\}=\{X,Z\},$ there is no need in interpolating the UPDRS score in order to select a model minimizing the estimated prediction error with respect to the whole speech signals sample.
\end{exam}

We note that in Example \ref{Interpolation example}, one may think that $\ystar$ is used twice, for model building and for prediction error estimation, and therefore over-fitting can occur.
However, since in $tAI$ approach, unlike in cross-validation approach,
$\ystar$ is used as a conceptual idea in order to derive $C_{tAI}$ and
not as real observations, $\ystar$ is not used twice. 

In the spatial data analysis domain, common application areas include geographical data \citep{li2014spatial} and neuroimaging data \citep{salimi2011using}. Such studies usually use GPR rather than LMM. Although GPR and LMM reflect different perspectives --- while GPR is
based on functional data analysis, LMM is based on multivariate analysis --- and use different techniques for expressing the covariance matrices, both models use conditional expectation, $\fhatstar,$ for prediction, hence $tAI$ is also applicable for GPR. In the Introduction we demonstrated this by the use case of creating high-resolution climate maps \citep{brown2002spatial}. A similar use case, containing chemical concentration in soil data is analyzed numerically in Section \ref{Numerical part}.

\subsection{Other Transductive Settings}
LMM and GPR are also used for modeling data without spatial or temporal correlation structure, and the prediction problems that arise often involve prediction outside the training sample. 

One interesting example is modeling the effect of SNPs (Single Nucleotide Polymorphism) on a phenotype as part of a Genome-Wide Association Study (GWAS). In this case the common practice is to consider the SNPs as random effects and other explanatory variables (e.g. age, height and gender) as fixed effects \citep{zhang2010mixed}.  When using LMM for modeling the effect of SNPs on phenotype, $tAI$ allows estimating the prediction error for an extended population compared to the training sample. It is directly useful in the important case when $\{X^*,Z^*\}$ can be collected from other studies which investigate different phenotype, however contain the SNPs and the explanatory variables that are used in the training sample \citep{wray2013pitfalls}. 

Missing values of the dependent variable which is a common phenomenon in statistical analysis and in particular in clinical trial with repeated measures study design \citep{wood2004missing,o2012prevention}.
There are many methods for handling missing values in repeated measures studies, some of the methods involving missing values imputation \citep{mallinckrodt2003assessing}. In case of having missing data of the dependent variable at some known points but the goal is to estimate the prediction error with respect to the original study design \citep{hogan2004handling}, $tAI$ can be used without imputing the missing values.

\section{Numerical Results} \label{Numerical part}
This section focuses on comparison between $tAI,\,cAI$ and $mAI,$ as well as between their corresponding model selection criteria, $tAIC,\,cAIC$ and $mAIC,$ using simulation and real data analyses.

\subsection{Simulation Analyses} \label{Simulation analysis}
The goal of the following analyses is to investigate the accuracy of $tAI,cAI$ and $mAI$ in estimating $-\E_{\ystar|\ybold}l(\ystar)/n^{*},$ for different sample sizes and variance setups.
In addition, $tAIC, cAIC$ and $mAIC$ will also be analyzed and compared with respect to the oracle solution
\begin{align*}
h_{best}&=\underset{h\in\mathcal{H}}{\argmin}-\frac{1}{n^{*}}\E_{\ystar|\ybold}l_h(\ystar).
\end{align*}
Additional numerical results with respect to a potentially different oracle solution 
\begin{align*}
h_{best}&=\underset{h\in\mathcal{H}}{\argmin}-\frac{1}{n^{*}}\E_{\ybold}\E_{\ystar|\ybold}l_h(\ystar)
\end{align*}
are presented in Appendix \ref{Additional Numeriocal Results}.
\paragraph{Simulation setup}
The simulation demonstrates prediction error estimation and model selection for the following LMM setting:
\begin{align*}
\phi_{i,j}=0.5\times time_{i,j}+\sum_{k=0}^{k=2}x_{i,j,k}+2\times\sum_{k=3}^{k=6}x_{i,j,k}+b_{i,1}+time_{i,j}\times b_{i,2}+\epsilon_{i,j},
\end{align*}
where $i\in\{1,...,S\}$ is the subject number and $j\in\{1,...,12\}$ is the measurement number. $b_{i,1}$ is distributed $N(0,15),$ $b_{i,2}$ is distributed $N(0,1)$ and $\epsilon_{i,j}$ is distributed $N(0,\sigma^2).$ In addition, $time_{i,j}=j,\,\forall j\leq 10,$ $time_{i,11}=15,\,time_{i,12}=20,$
$x_{i,j,0}=1,$ $x_{i,j,1}$ was drawn from $Ber(0.5)$ and $x_{i,j,k},\,\forall k\geq2$ was drawn form $N(0,1).$

The dependent variable in the training set $,\ybold,$ was defined as $\phi_{i,j}\,\forall j\in\{1,..,10\},$
the dependent variable in the prediction set, $\ystar,$ was defined as $\phi_{i,j}\,\forall j\in\{11,12\}.$
Therefore, this setting demonstrates predicting at extrapolation time points. This setting was generated nine times, for different number of subjects, $S\in\{10,20,30\},$ and different residual variance values, $\sigma^2\in\{15,20,25\}.$ Each simulation was repeated $200$ times.

Three linear mixed models were fitted given the true covariance matrices, all the models contain the time covariate, in addition, model number 1 contains $x_{i,j,k},\,\forall k\leq2,$ model number 2 contains $x_{i,j,k},\,\forall k\leq4$ and model number 3 contains $x_{i,j,k},\,\forall k$ which is also the model that generates the data.
\paragraph{Results}
Figure \ref{Densities} presents the densities of  $tAI,\,cAI,\,mAI$ and $-\E_{\ystar|\ybold}l(\ystar)/n^{*}$ as a function of the sample size and $\sigma^2$ for model number $3,$ as generated from the $200$ simulation runs.
\begin{figure}[h!]
\begin{centering}
\includegraphics[width=1\linewidth]{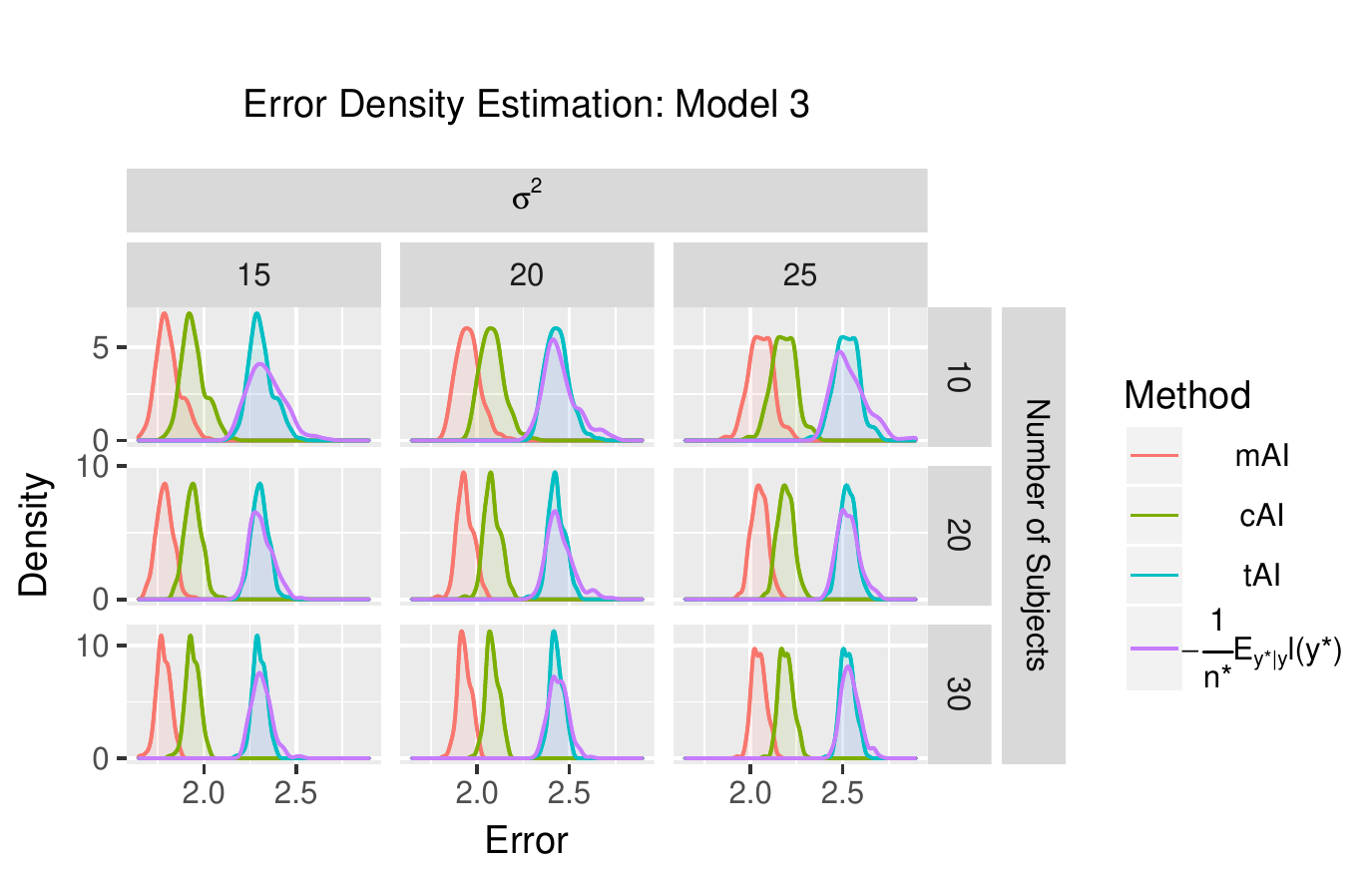}
\caption{Densities of  $tAI,\,cAI,\,mAI$ and $-\E_{\ystar|\ybold}l(\ystar)/n^{*}$ as a function of the number of subjects and $\sigma^2$.}
\label{Densities}
\end{centering}
\end{figure}

As can be seen from Figure \ref{Densities}, $tAI$ density is concentrated around the mean of $-\E_{\ystar|\ybold}l(\ystar)/n^{*}.$ $cAI$ and $mAI$ are stochastically smaller than $-\E_{\ystar|\ybold}l(\ystar)/n^{*}$ since their corrections, $tr(H)/n$ and $p/n,$ are unsuitable for this case of predicting at extrapolation points.

In addition, since $tAI,\,cAI$ and $mAI$ share the same random part, $l(\ybold)/n,$ but different mean, $-\E_{\ybold}l(\ybold)/n$ plus $C_{tAI},\,tr(H)/n,\,p/n$ respectively, their densities have the same shape however shifted with respect to the corrections. In contrast, $-\E_{\ystar|\ybold}l(\ystar)/n^{*}$ has the same mean as $tAI$ but different variance, since $\Var\left(-\E_{\ystar|\ybold}l(\ystar)/n^{*}\right)$ depends on $H^*,\,R^*$ and $n^*$ that do not appear in $\Var\left(tAI\right)=\Var\left(-l(\ybold)/n\right).$ In our case, $H^*$ contains large values compared to $H$ and therefore
\[
\Var\left(\E_{\ystar|\ybold}-\frac{1}{n^{*}}l(\ystar)\right)>\Var\left(tAI\right).
\]

Figure \ref{Average Argmin} presents for each criterion, $tAIC,\,cAIC$ and $mAIC,$ the error 
\[
\E_{\ystar|\ybold}-\frac{1}{n^{*}}l_{h_{best}}(\ystar),
\]
where $h_{best}$ is the selected model by the relevant criterion. This error reflects the true average error that is obtained when implementing the different model selection criteria. In addition, the average error of the oracle criterion,
\begin{align*}
h_{best}&=\underset{h\in\{1,2,3\}}{\argmin}-\frac{1}{n^{*}}\E_{\ybold}\E_{\ystar|\ybold}l_h(\ystar),
\end{align*}
is presented as well.
\begin{figure}[h!]
\begin{centering}
\includegraphics[width=1\linewidth]{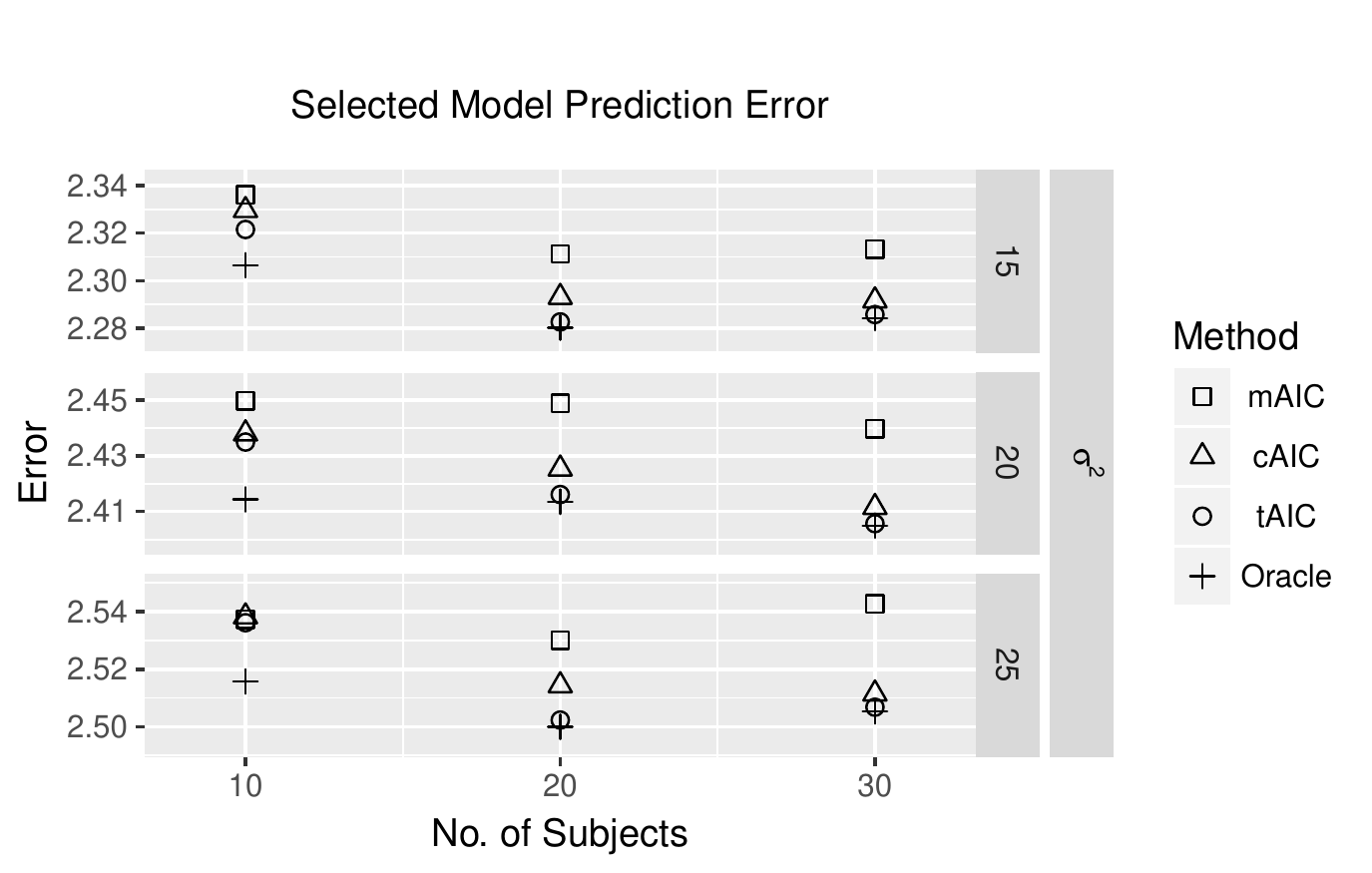}
\caption{For each setup, each symbol refers to the prediction error $\E_{\ystar|\ybold}-l_{h_{best}}(\ystar)/n^{*}$ of the relevant criterion, $mAIC,\,cAIC\,tAIC$ and the oracle criterion.}
\label{Average Argmin}
\end{centering}
\end{figure}

As can be seen from Figure \ref{Average Argmin}, $tAIC$ obtain better results than $cAIC$ and $mAIC$ in all the nine setups. Similar analysis with respect to the error
\[
\E_{\ybold}\E_{\ystar|\ybold}-\frac{1}{n^{*}}l_{h_{best}}(\ystar
)\]
is presented in in Appendix \ref{Additional Numeriocal Results}.

Figure \ref{Agreement Rate} presents the agreement rate of the criteria, $tAIC,\,cAIC$ and $mAIC$ with the oracle criterion 
\begin{align*}
h_{best}&=\underset{h\in\{1,2,3\}}{\argmin}-\frac{1}{n^{*}}\E_{\ystar|\ybold}l_h(\ystar).
\end{align*}
\begin{figure}[h!]
\begin{centering}
\includegraphics[width=1\linewidth]{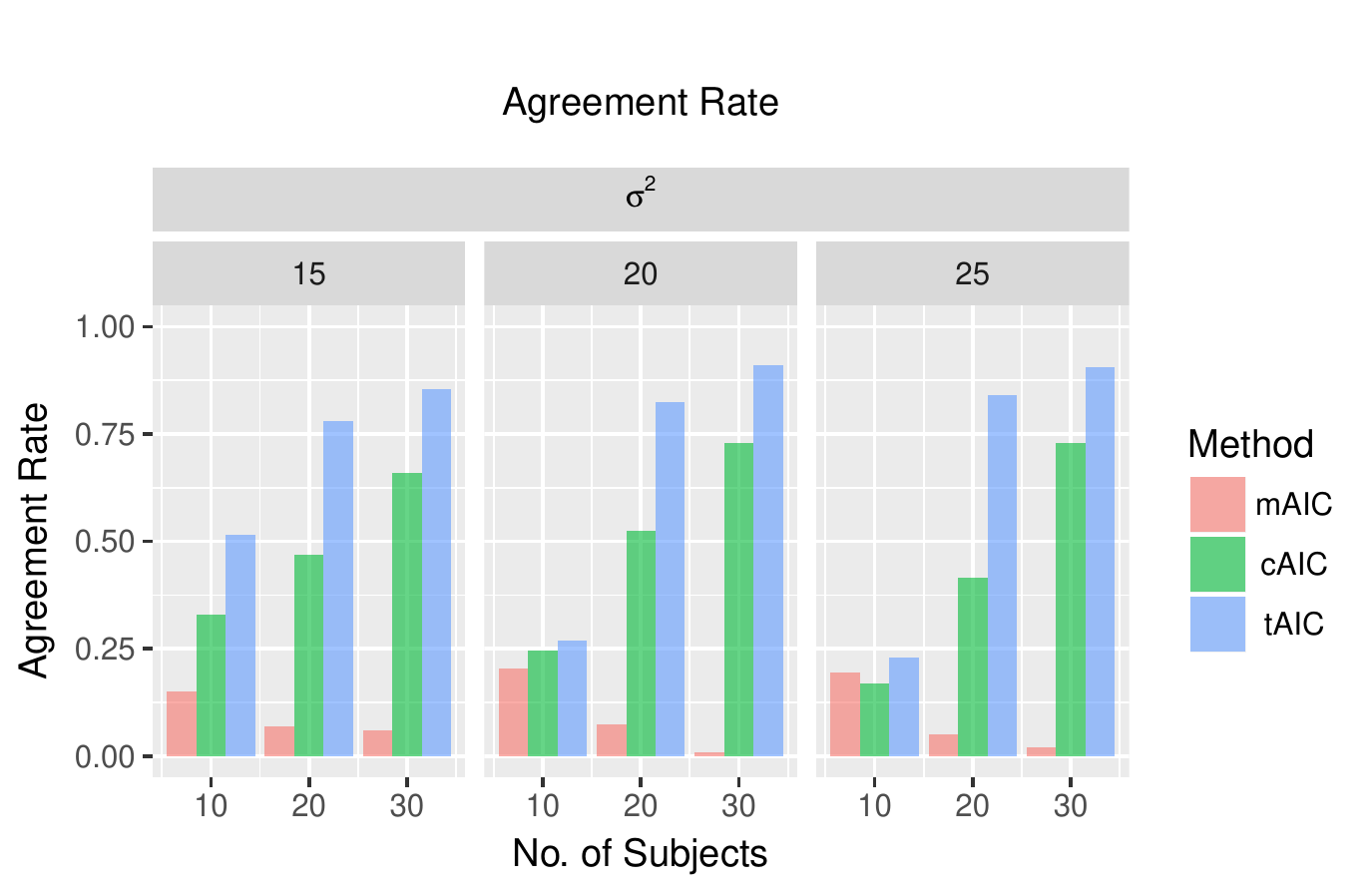}
\caption{For each setup, each bar refers to the agreement rate of the relevant criterion with the oracle criterion}
\label{Agreement Rate}
\end{centering}
\end{figure}

As can be seen from Figure \ref{Agreement Rate}, $tAIC$ achieves the best results in this case as well.

Similar analysis with respect to the oracle criterion
\begin{align*}
h_{best}&=\underset{h\in\{1,2,3\}}{\argmin}-\frac{1}{n^{*}}\E_{\ybold}\E_{\ystar|\ybold}l_h(\ystar)
\end{align*}
is presented in Appendix \ref{Additional Numeriocal Results}.

\subsection{Real data analyses}
The analyses below focus on comparison between $tAI,\,cAI,\,mAI$ and
\[
-\frac{1}{n^{*}}l(\ystar).
\]
Here, $-l(\ystar)/n^{*}$ is used as a ground truth instead of $-\E_{\ystar|\ybold}l(\ystar)/n^{*}$ since the latter is unknown for the real data sets.
\subsubsection{Meuse data}
\paragraph{Data description}
Meuse data set was introduced by \cite{rikken1993soil} and is available in R software. The data was collected in a floodplain area of the river Meuse, near the village of Stein, Netherlands, and contains 155 measurements of topsoil concentrations of Zinc, Lead, Copper and Cadmium, along with location (latitude and longitude) and other covariates. In addition, another data set, Meuse.grid, is analyzed. Meuse.grid is a higher resolution grid of the same area, containing 3103 observations of location and some of the covariates that are available in the Meuse data set, however it doesn't contain the metal concentration measurements. The Meuse.grid is available in R software as well. 
\paragraph{Results}
The Meuse data set was
partitioned randomly into training and test samples. Four Gaussian process regression models were fitted to the log of the Lead concentration.\footnote{Only $\log(Lead)$ can be analyzed under the normality assumption.} All the models share the same kernel structure, squared-exponential kernel,
\[
K(\boldsymbol{Z_i},\boldsymbol{Z_j})=\sigma_{f}^{2}\exp\left[-\frac{1}{2}\left\{\frac{1}{l_{1}^{2}}\left(Z_{i,1}-Z_{j,1}\right)^{2}+\frac{1}{l_{2}^{2}}\left(Z_{i,2}-Z_{j,2}\right)^{2}\right\}\right],
\]
where $Z_{i,1}$ refers to the latitude of measurement $i,$ $Z_{i,2}$ refers to longitude of measurement $i$ and $l_1,\,l_2$ and $\sigma_{f}$ lie in $\mathbb{R}^{+}.$ Each model has a different marginal mean, see Table \ref{Table Meuse}. The descriptions of the covariates can be found in R software.
\begin{table}[h!]
  \centering
  \caption{Meuse data: Covariates}  \label{Table Meuse}
    \begin{tabular}{|c| c c c|} \hline
   Model & \multicolumn{3}{c|}{Covariates} \\ \hline
       & $Intercept,dist,ffreq, soil$     & $dist \times ffreq$     &$dist \times Soil$\\ \hline
    1  & \checkmark  &              &             \\
    2  & \checkmark  & \checkmark   &              \\
    3  & \checkmark  &              & \checkmark   \\
    4  & \checkmark  & \checkmark   & \checkmark  \\\hline
    \end{tabular}
\end{table}
\begin{figure}[h!]
\begin{centering}
\includegraphics[width=1\linewidth]{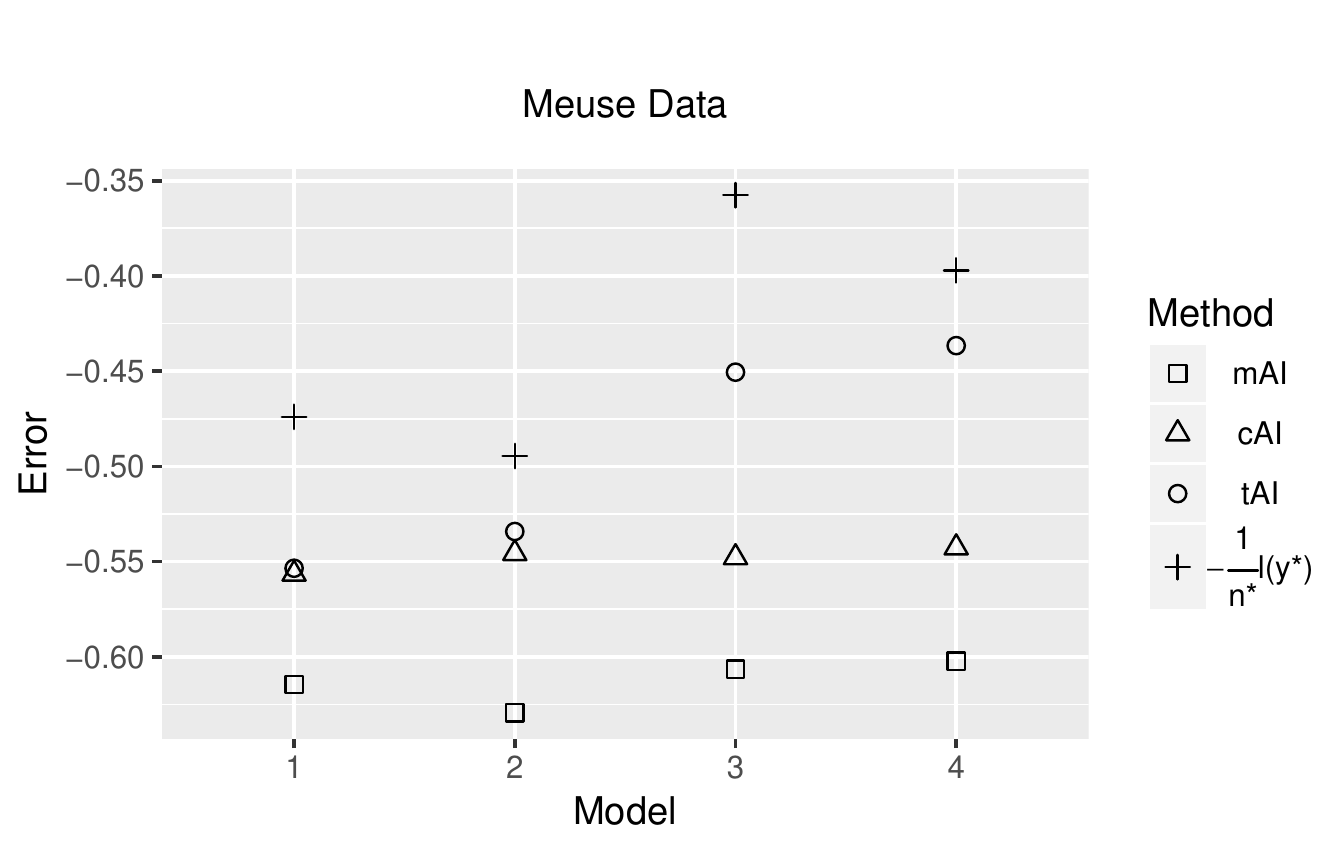}
\caption{For each model, each symbol refers to a prediction error, estimated by a different method.}
\label{Meuse}
\end{centering}
\end{figure}
As can be seen in Figure \ref{Meuse}, $tAI$ estimates $-l(\ystar)/n^{*}$ most accurately. The other prediction error estimators consistently under estimate $-l(\ystar)/n^{*}.$ 

Figure \ref{Meuse.grid} is based on Meuse and on Meuse.grid data sets where the whole Meuse data set is used as training data and the Meuse.grid data set is used as the prediction set, $\{X^*,Z^*\}.$ Since the Lead consternation is not given in the Meuse.grid data set, then $-l(\ystar)/n^{*}$
is unknown. Therefore $tAI,\,cAI$ and $mAI$ are compared without having a ground truth. 
\begin{figure}[h!]
\begin{centering}
\includegraphics[width=1\linewidth]{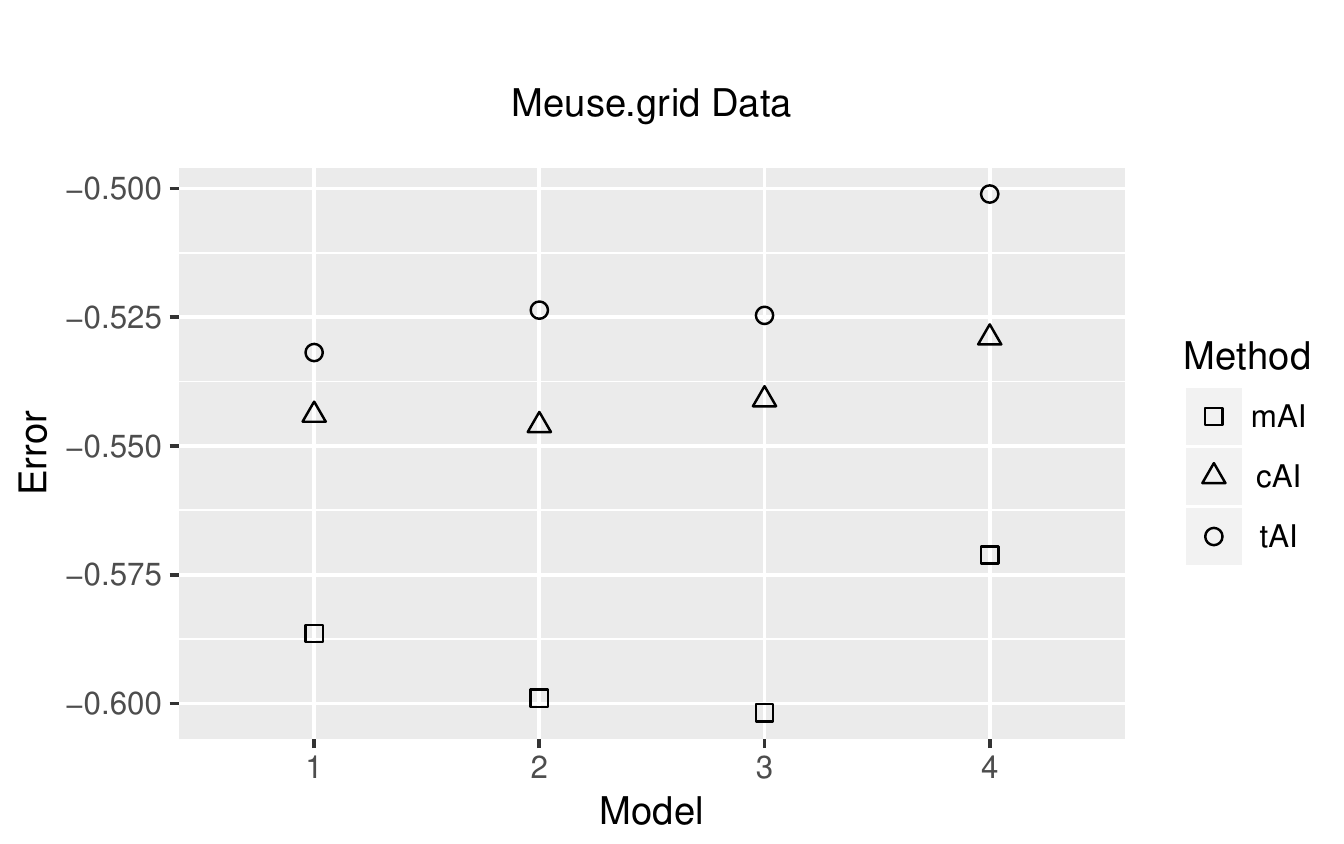}
\caption{For each model, each symbol refers to a prediction error, estimated by a different method.}
\label{Meuse.grid}
\end{centering}
\end{figure}

As can be seen from Figure \ref{Meuse.grid}, the differences between the $tAI,\,cAI$ and $mAI$ are sustained and the results are consistent with the previous figures, i.e., $cAI,\,mAI$ give lower error estimates, which likely underestimate the prediction error.

\subsubsection{Growth data}
\paragraph{Data description}
The Growth data was introduced by \cite{potthoff1964generalized} and contains four skull length measurements for 27 children at ages $8,\,10,\,12$ and $14$ (total of $27\times4$ measurements) along with the child's age and gender. 
\paragraph{Results}
Figure \ref{Growth14} presents a scenario where the training sample is defined as the skull length measurements at ages $8,\,10,\,12$ and the prediction set is defined as the skull length measurements at age $14.$ Three linear mixed models are fitted, all have the same variance structure, containing random intercept per child and random slope for the child's age, however each model has a different set of fixed effects (see Table \ref{Growth table}).
\begin{table}[h!]
  \centering
  \caption{Growth data: Covariates}  \label{Growth table}
    \begin{tabular}{|c| c c c c|} \hline
   Model & \multicolumn{4}{c|}{Covariates} \\ \hline
     & $Intercept$     & $Age$     &$Gender$   &$Age\times Gender$ \\ \hline
    1  & \checkmark  & \checkmark   &            &\\
    2  & \checkmark  & \checkmark   & \checkmark & \\
    3  & \checkmark  & \checkmark   & \checkmark & \checkmark \\\hline
    \end{tabular}
\end{table}
\begin{figure}[h!] 
\begin{centering}
\includegraphics[width=1\linewidth]{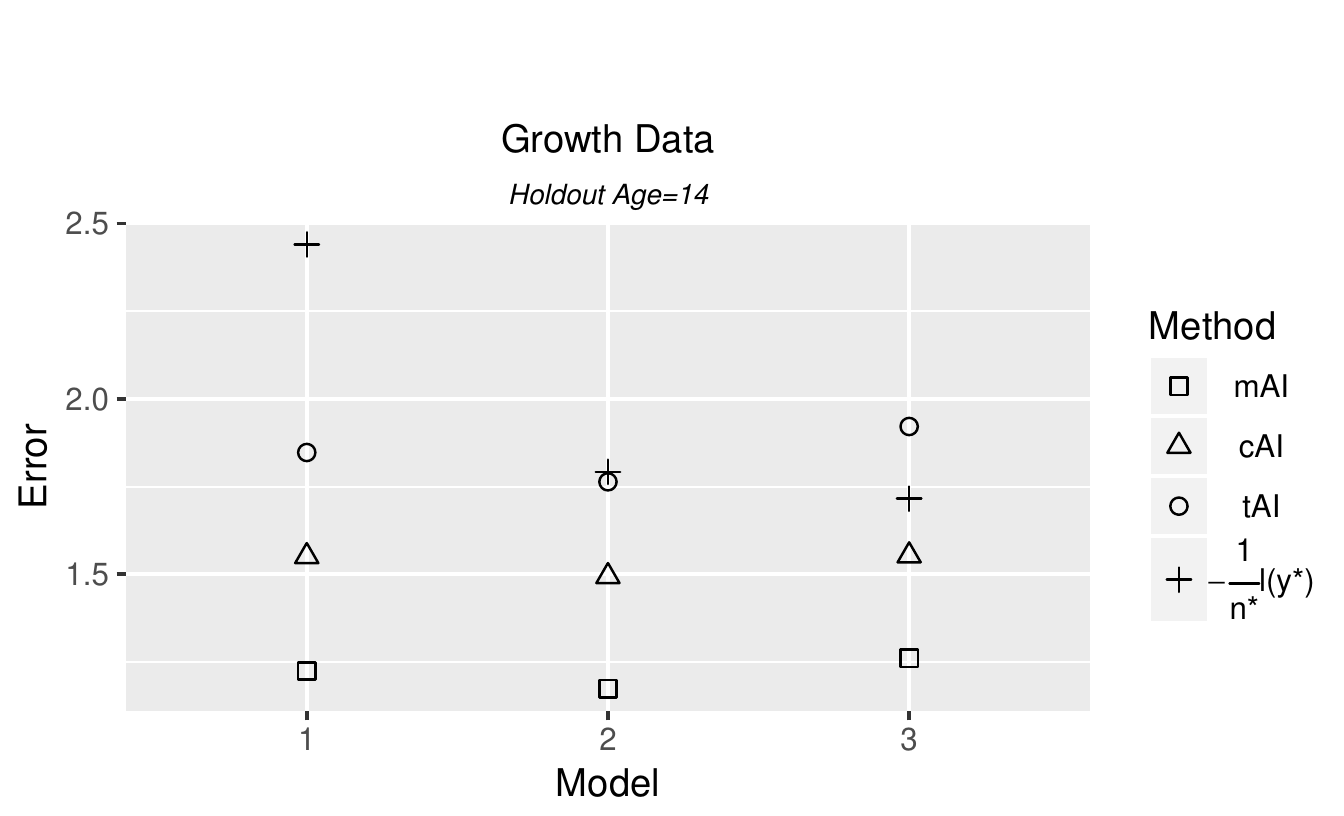}
\caption{For each model, each symbol refers to a prediction error, estimated by a different method.}\label{Growth14}
\end{centering}
\end{figure}

As can be seen in Figure \ref{Growth14}, in general perspective, $tAI$ estimates $-l(\ystar)/n^{*}$ most accurately. The other prediction error estimators under estimate $-l(\ystar)/n^{*}.$ 

Figure \ref{GrowthNot14} presents three similar analyses as is presented in Figure \ref{Growth14}, however where the other time-points measurements are designated as holdout.
\begin{figure}[h!] 
\begin{centering}
\includegraphics[width=1\linewidth]{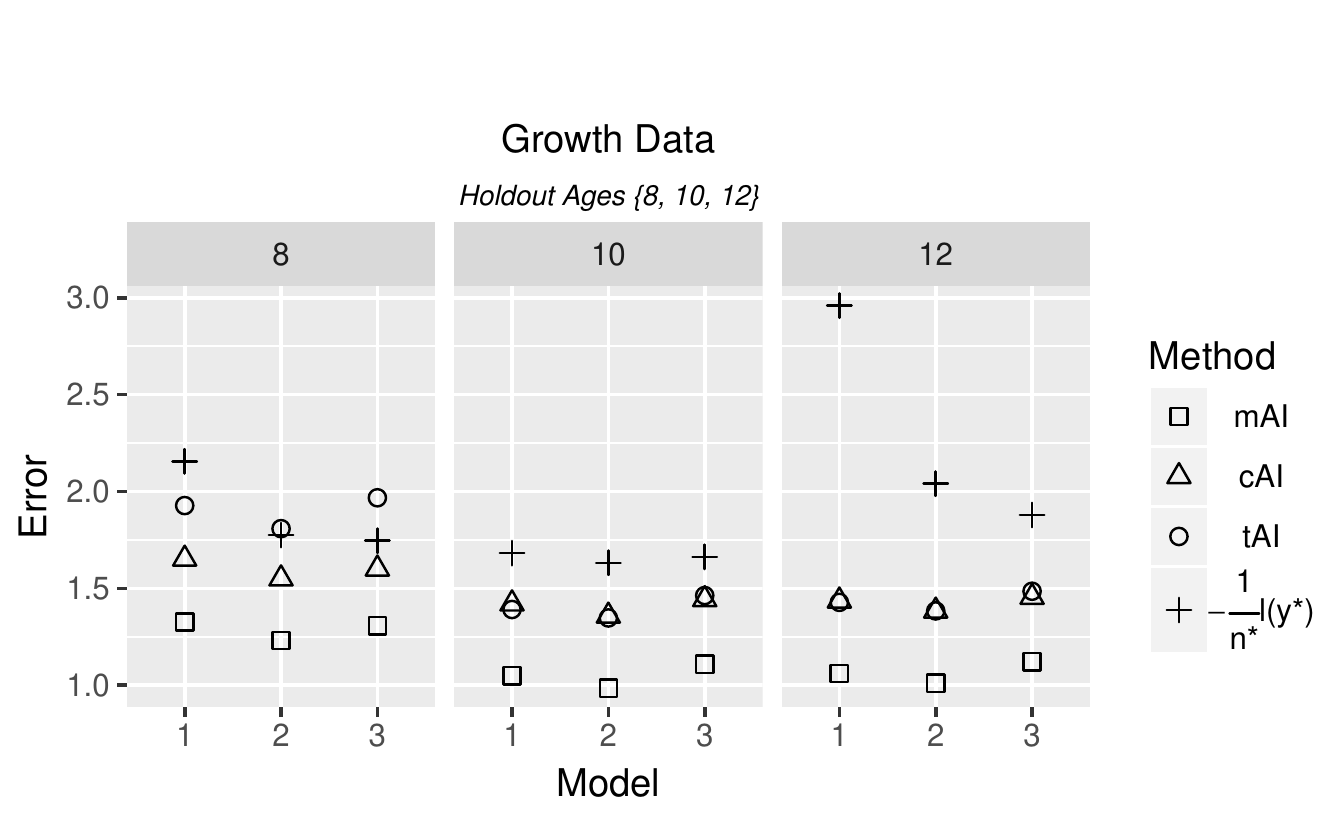}
\caption{For each model, each symbol refers to a prediction error, estimated by a different method.} 
\label{GrowthNot14}
\end{centering}
\end{figure}
When $age=8,$ the results are similar to the results in Figure \ref{Growth14}, however, when $age=10$ and $age=12,$ $tAI$ and $cAI$ have similar performance. This is not surprising since in these cases $\{X^*,Z^*,R^*\}$ is similar to $\{X,Z,R\}.$


\section{Optimism for Prediction at Interpolation and Extrapolation Points}
\label{Optimism section}
The formulation of $tAI$ and the derivation of $C_{tAI}$ are based on the normality assumptions of $\ybold$ and $\ystar$ which is commonly assumed when LMM and GPR are implemented. However, the approach that is used for developing $tAI$ can be used for creating other prediction error estimators that are not based on the normality assumption of $\ystar$ and $\ybold.$ For example, in the standard formulation of the prediction error estimator that is based on expected Optimism correction \citep{efron1986biased},
\[
Loss(Opt)=\frac{1}{n}\|\ybold-H\ybold\|^2_2+w,
\]
where
\[
w=\E_{\ybold}\left(\frac{1}{n}\E_{\ystar|\ybold}\|\ystar-H\ybold\|^2_2-\frac{1}{n}\|\ybold-H\ybold\|^2_2\right)
\]
and it is assumed that $\ystar$ and $\ybold$ are drawn from the same distribution and have the same predictor, $H\ybold.$ However, as was already discussed in the previous sections, these conditions are not satisfied in many use cases. The following prediction error generalizes $Loss(Opt),$
\[
Loss(Opt_{t})=\frac{1}{n}\|\ybold-H\ybold\|^2_2+w_{t},
\]
where
\[
w_{t}=\E_{\ybold}\left(\frac{1}{n^*}\E_{\ystar|\ybold}\|\ystar-H^*\ybold\|^2_2-\frac{1}{n}\|\ybold-H\ybold\|^2_2\right).
\]
Similarly to $tAIC$ definition, given a set of models $\mathcal{H},$ $Loss(Opt_{t})$ can be used for model selection as follows
\begin{align}
h_{best}=\underset{h\in\mathcal{H}}{\argmin}\,Loss_{h}(Opt_{t}),\label{Transductive Optimism criterion}
\end{align}
where $Loss_{h}(Opt_{t})$ is $Loss(Opt_{t})$ for model $h.$

Lemma \ref{Optimism lemma} introduces a general expression of $w_{t}$ for predictors that are linear in $\ybold.$
\begin{lemma} \label{Optimism lemma}
Let $\ybold\in\mathbb{R}^{n}$ be a random variable with mean $\mubold$ and variance $V.$ Similarly, let $\ystar\in\mathbb{R}^{n^*}$ be a random variable with mean $\mustar$ and variance $V^*.$ In addition, let $H\ybold\in R^{n}$ and $H^*\ybold\in R^{n^*}$ be the predictors of $\ybold$ and $\ystar$ respectively when $H,H^*$ don't contain $\ybold$ and $\ystar$.
Then 
\begin{align*}
w_{t} =&\frac{2}{n}tr\left(HV\right)-\frac{2}{n^*}tr\left(H^{*}\Cov(\ybold,\ystar)\right)\\
&+\frac{1}{n^*}tr\left(V^*\right)-\frac{1}{n}tr\left(V\right)+\frac{1}{n^*}tr\left(H^{*}VH^{*'}\right)-\frac{1}{n}tr\left(HVH^{'}\right)\\
&+\frac{1}{n}tr\left(2H\mubold\mutrbold-\mubold\mutrbold-H\mubold\mutrbold H^{'}\right)-\frac{1}{n^*}tr\left(2H^{*}\mubold\mustartr-\mustar\mustartr-H^{*}\mubold\mutrbold H^{*'}\right).
\end{align*}
\end{lemma}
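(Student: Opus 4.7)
The plan is to reduce the claim to a direct computation of two second moments of quadratic forms, which is routine once one invokes the identity
\[
\E[X^{'}AX]=tr(A\,\Var(X))+\E(X)^{'}A\,\E(X),
\]
and its bilinear analogue
\[
\E[X^{'}AY]=tr(A\,\Cov(Y,X))+\E(X)^{'}A\,\E(Y),
\]
neither of which requires normality—only existence of the first two moments, which is given.

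First I would apply the tower property to split
\[
w_{t}=\frac{1}{n^{*}}\E_{\ybold,\ystar}\|\ystar-H^{*}\ybold\|_{2}^{2}-\frac{1}{n}\E_{\ybold}\|\ybold-H\ybold\|_{2}^{2}.
\]
Then I would expand each squared norm as
\[
\|\ystar-H^{*}\ybold\|_{2}^{2}=\ystar^{'}\ystar-2\ystar^{'}H^{*}\ybold+\ybold^{'}H^{*'}H^{*}\ybold,\qquad\|\ybold-H\ybold\|_{2}^{2}=\ybold^{'}\ybold-2\ybold^{'}H\ybold+\ybold^{'}H^{'}H\ybold,
\]
and compute the six expectations term by term using the two identities above, with $\Var(\ybold)=V$, $\Var(\ystar)=V^{*}$, $\E(\ybold)=\mubold$, $\E(\ystar)=\mustar$, and $\Cov(\ybold,\ystar)$ entering the cross term $\E[\ystar^{'}H^{*}\ybold]=tr(H^{*}\Cov(\ybold,\ystar))+\mustar^{'}H^{*}\mubold$.

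Next I would collect the variance-type contributions: the terms $\frac{1}{n^{*}}tr(V^{*})-\frac{1}{n}tr(V)$ come from $\E[\ystar^{'}\ystar]$ and $\E[\ybold^{'}\ybold]$, the terms $\frac{1}{n^{*}}tr(H^{*}VH^{*'})-\frac{1}{n}tr(HVH^{'})$ come from the quadratic forms in $\ybold$, and the terms $-\frac{2}{n^{*}}tr(H^{*}\Cov(\ybold,\ystar))+\frac{2}{n}tr(HV)$ come from the cross terms. This already reproduces the first two lines of the stated expression for $w_{t}$.

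Finally I would handle the mean-dependent pieces by rewriting each outer-product of means as a trace, e.g.\ $\mustar^{'}\mustar=tr(\mustar\mustartr)$ and $\mustar^{'}H^{*}\mubold=tr(H^{*}\mubold\mustartr)$, and grouping the resulting six terms into the two trace expressions $\frac{1}{n}tr(2H\mubold\mutrbold-\mubold\mutrbold-H\mubold\mutrbold H^{'})$ and $-\frac{1}{n^{*}}tr(2H^{*}\mubold\mustartr-\mustar\mustartr-H^{*}\mubold\mutrbold H^{*'})$, matching the last line of the claim. The only potential obstacle is bookkeeping the signs and the pairing of the mean-term components into the two compact traces stated in the lemma; there is no conceptual difficulty since normality is not used and all expectations exist by assumption. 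Because the proof of Theorem \ref{Main Theorem} in the appendix performs an analogous (more involved) manipulation for the log-likelihood loss, the argument here can be presented as a direct specialization to squared error loss, which removes the $\log|R|,|R^{*}|$ contributions and collapses $R,R^{*}$ to the identity.
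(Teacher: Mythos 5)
Your proposal is correct: the tower property, the expansion of the two squared norms, and the identities $\E[X^{'}AX]=tr(A\Var(X))+\E(X)^{'}A\E(X)$ and $\E[X^{'}AY]=tr(A\Cov(Y,X))+\E(X)^{'}A\E(Y)$ reproduce every term of $w_{t}$, including the sign pattern of the two mean-dependent traces, and you rightly observe that normality is never needed. This is essentially the same computation the paper performs inside the proof of Theorem \ref{Main Theorem} (with the weight matrices $R^{-1},R^{*^{-1}}$ in place of the identity), so your argument is a direct specialization of that derivation; note that the appendix item labelled as the proof of Lemma \ref{Optimism lemma} actually establishes a different claim (that $\fhatstar$ solves the affine least-squares problem $\min_{\abold,B}\E\|\ystar-(\abold+B\ybold)\|_{2}^{2}$), so your write-up is, if anything, the more directly relevant proof of the stated lemma.
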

\begin{corollary}
Given the definitions in lemma \ref{Optimism lemma}, when $H\mubold=\mubold$ and $H^*\mubold=\mustar$
\begin{align*}
w_{t}=&\frac{2}{n}tr\left(HV\right)-\frac{2}{n^*}tr\left(H^{*}\Cov(\ybold,\ystar)\right)\\
&+\frac{1}{n^*}tr\left(V^*\right)-\frac{1}{n}tr\left(V\right)+\frac{1}{n^*}tr\left(H^{*}VH^{*'}\right)-\frac{1}{n}tr\left(HVH^{'}\right).
\end{align*}
\end{corollary}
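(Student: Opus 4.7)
The plan is to establish Lemma \ref{Optimism lemma} by direct expansion of the two squared norms defining $w_{t}$ and then read off the Corollary by setting the bias terms to zero. The only probabilistic ingredients needed are the standard quadratic-form identities
\[
\E(u^{'}Au)=tr(A\Var(u))+(\E u)^{'}A(\E u),\qquad \E(u^{'}Av)=tr(A\Cov(v,u))+(\E v)^{'}A(\E u),
\]
together with the cyclic property of the trace and the symmetry of $V$.

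First I would compute the in-sample expectation. Writing $\|\ybold-H\ybold\|_{2}^{2}=\ybold^{'}(I-H)^{'}(I-H)\ybold$ and taking $\E_{\ybold}$ yields, after expanding $(I-H)^{'}(I-H)=I-H-H^{'}+H^{'}H$ and applying the cyclic property,
\[
tr(V)-2tr(HV)+tr(HVH^{'})+\mubold^{'}\mubold-2\mubold^{'}H\mubold+\mubold^{'}H^{'}H\mubold.
\]
I would then recognize the three scalar terms involving $\mubold$ as traces of rank-one matrices: $\mubold^{'}\mubold=tr(\mubold\mutrbold)$, $\mubold^{'}H\mubold=tr(H\mubold\mutrbold)$, and $\mubold^{'}H^{'}H\mubold=tr(H\mubold\mutrbold H^{'})$, so that the bias contribution collapses into $-tr(2H\mubold\mutrbold-\mubold\mutrbold-H\mubold\mutrbold H^{'})$, exactly the form used in the lemma.

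Next I would handle the out-of-sample expectation $\E_{\ybold}\E_{\ystar|\ybold}\|\ystar-H^{*}\ybold\|_{2}^{2}$ via iterated expectation as $\E_{(\ybold,\ystar)}$, expanding the squared norm into $\ystar^{'}\ystar-2\ystar^{'}H^{*}\ybold+\ybold^{'}H^{*'}H^{*}\ybold$ and applying the two identities above to each piece. The only genuinely bilinear term is $\E[\ystar^{'}H^{*}\ybold]$, which I would evaluate by writing $\ystar^{'}H^{*}\ybold=tr(H^{*}\ybold\ystar^{'})$ and using $\E(\ybold\ystar^{'})=\Cov(\ybold,\ystar)+\mubold\mustartr$, producing $tr(H^{*}\Cov(\ybold,\ystar))+\mustar^{'}H^{*}\mubold$. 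Grouping the mean pieces as traces in $H^{*}\mubold\mustartr$, $\mustar\mustartr$, and $H^{*}\mubold\mutrbold H^{*'}$ again puts them in exactly the shape appearing in the lemma.

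Subtracting $(1/n)$ times the first expression from $(1/n^{*})$ times the second and collecting terms produces the displayed formula for $w_{t}$. The only real obstacle is bookkeeping: one must track the sign conventions carefully (especially that $\E[\ystar^{'}H^{*}\ybold]$ contributes $\Cov(\ybold,\ystar)$, not $\Cov(\ystar,\ybold)$, once the cyclic identity is applied) and keep the variance and mean contributions separate until the final assembly. The Corollary then follows in one line: under $H\mubold=\mubold$, using $tr(H\mubold\mutrbold H^{'})=(H\mubold)^{'}(H\mubold)=\mubold^{'}\mubold=tr(\mubold\mutrbold)$ one obtains $tr(2H\mubold\mutrbold-\mubold\mutrbold-H\mubold\mutrbold H^{'})=2tr(\mubold\mutrbold)-tr(\mubold\mutrbold)-tr(\mubold\mutrbold)=0$, and symmetrically the starred bias trace vanishes under $H^{*}\mubold=\mustar$.
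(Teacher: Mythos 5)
Your proposal is correct and follows essentially the same route as the paper: the expansion of the two squared norms via $\E(u^{'}Au)=tr(A\Var(u))+(\E u)^{'}A(\E u)$ is the unweighted analogue of the computation in the proof of Theorem \ref{Main Theorem}, and your final cancellation of $tr\left(2H\mubold\mutrbold-\mubold\mutrbold-H\mubold\mutrbold H^{'}\right)$ under $H\mubold=\mubold$ (and its starred counterpart under $H^{*}\mubold=\mustar$) is exactly the argument the paper uses there. The only cosmetic difference is that you cancel at the level of traces while the paper cancels the rank-one matrices themselves, which is equivalent.
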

In case $H^*=H,$ $V^*=V$ and $V-\Cov(\ybold,\ystar)=\sigma^2I_n,$ then 
\[
w_{t}=w=\frac{2}{n}tr\left(H\right),
\]
which is the same result as was introduced by \cite{hodges2001counting} for Linear Hierarchical models.

$Loss(Opt_{t})$ is based on the squared error loss function which reflects Euclidean distance. Other prediction error estimators which are based on different distances, such as on Mahalanobis distance \citep{mahalanobis1936generalised}, might be suggested as well. Corollary \ref{corr Mahalanobis} presents a penalty correction for a prediction error estimator which is based on Mahalanobis distance.
\begin{corollary}\label{corr Mahalanobis}
Given the definitions in lemma \ref{Optimism lemma}, when $H\mubold=\mubold$ and $H^*\mubold=\mustar$
\begin{align}\label{Mahalanobis}
&\E_{\ybold}\left(\frac{1}{n^*}\E_{\ystar|\ybold}\|\ystar-H^*\ybold\|^2_M-\frac{1}{n}\|\ybold-H\ybold\|^2_M\right)\\\nonumber
&\qquad\qquad\qquad\qquad=\frac{2}{n}tr\left(R^{-1}HV\right)-\frac{2}{n^*}tr\left(R^{*^{-1}}H^{*}\Cov(\ybold,\ystar)\right)\\\nonumber
& \qquad\qquad\qquad\qquad\,\,\,\,\,\,\,+\frac{1}{n^*}tr\left(R^{*^{-1}}V^{*}\right)-\frac{1}{n}tr\left(R^{-1}V\right)\\\nonumber
 & \qquad\qquad\qquad\qquad\,\,\,\,\,\,\,+\frac{1}{n^*}tr\left(R^{*^{-1}}H^{*}VH^{*'}\right)-\frac{1}{n}tr\left(R^{-1}HVH^{'}\right)\\\nonumber
&\qquad\qquad\qquad\qquad=2C_{tAI}-\log\left(\frac{|R^{*}|^{\frac{1}{n^*}}}{|R|^{\frac{1}{n}}}\right),
\end{align}
where 
\begin{align*}
\|\ystar-H^*\ybold\|^2_M&=(\ystar-H^*\ybold)^{'}R^{*{-1}}(\ystar-H^*\ybold)\\
\|\ybold-H\ybold\|^2_M&=(\ybold-H\ybold)^{'}R^{-1}(\ybold-H\ybold).
\end{align*}
\end{corollary}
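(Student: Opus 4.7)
The plan is to mirror the proof of Lemma \ref{Optimism lemma} but with the Euclidean norm replaced by the Mahalanobis norm, and then read off the identification with $2C_{tAI}$ directly from Theorem \ref{Main Theorem}. First I would rewrite each squared Mahalanobis distance as a trace, using
\[
\|\ystar-H^*\ybold\|_M^2 = \mathrm{tr}\!\left(R^{*-1}(\ystar-H^*\ybold)(\ystar-H^*\ybold)^{'}\right),
\]
and analogously for $\|\ybold-H\ybold\|_M^2$. This lets me pull the expectation inside the trace and exploit the standard identity $\E(XX^{'})=\Var(X)+\E(X)\E(X)^{'}$.

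Applied to $X=\ystar-H^*\ybold$, this yields $\Var(\ystar-H^*\ybold)=V^{*}-\Cov(\ystar,\ybold)H^{*'}-H^{*}\Cov(\ybold,\ystar)+H^{*}VH^{*'}$ for the variance part, and $(\mustar-H^{*}\mubold)(\mustar-H^{*}\mubold)^{'}$ for the mean part; the latter vanishes by the hypothesis $H^{*}\mubold=\mustar$. The same calculation for $X=(I-H)\ybold$ gives $(I-H)V(I-H)^{'}$ and a mean term that vanishes by $H\mubold=\mubold$. Substituting these into the traces and using symmetry of $V,R,R^{*}$ to collapse the two cross-covariance terms into a single factor of $2$ (e.g.\ $\mathrm{tr}(R^{*-1}\Cov(\ystar,\ybold)H^{*'})=\mathrm{tr}(R^{*-1}H^{*}\Cov(\ybold,\ystar))$ by transposition together with cyclicity of the trace) produces the six trace expression in eq.\ (\ref{Mahalanobis}).

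For the second equality I would simply inspect the formula for $C_{tAI}$ given by Theorem \ref{Main Theorem}: doubling $C_{tAI}$ reproduces exactly the six trace terms just obtained plus the extra log-determinant term $\log(|R^{*}|^{1/n^{*}}/|R|^{1/n})$. Rearranging yields the claimed identity $2C_{tAI}-\log(|R^{*}|^{1/n^{*}}/|R|^{1/n})$.

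The only real bookkeeping difficulty is keeping track of the symmetry arguments needed to justify combining cross terms (so that the coefficient $2$ appears where Lemma \ref{Optimism lemma}'s proof already has it); once these are handled, the result is an essentially mechanical repetition of the Lemma's computation, with $R^{-1}$ and $R^{*-1}$ inserted as the inner product weights, and the matching with $2C_{tAI}$ is immediate from Theorem \ref{Main Theorem}.
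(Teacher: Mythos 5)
Your proposal is correct and follows essentially the same route as the paper: the six-trace identity is exactly the intermediate computation carried out in the proof of Theorem \ref{Main Theorem} (expanding the weighted quadratic forms via second moments, killing the mean terms with $H\mubold=\mubold$, $H^{*}\mubold=\mustar$, and merging the cross terms by symmetry and cyclicity of the trace), and the identification with $2C_{tAI}-\log\bigl(|R^{*}|^{1/n^{*}}/|R|^{1/n}\bigr)$ is read off directly from the resulting formula for $C_{tAI}$. No gaps.
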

The relation between eq. (\ref{Mahalanobis}) and $C_{tAI}$ arises due to the relation between Mahalanobis distance and the normal likelihood which $tAI$ is based on.

It is natural to use $Loss(Opt_{t})$ instead of $tAI$ for linear predictors that don't assume normality, such as the predictors that are used in nearest neighbors, Nadaraya-Watson kernel regression and smoothing spline models. Moreover, due to the form of the normal density function, many predictors that seem to be based on the normality assumption can be alternatively interpreted as a solution of a least squares problem or complex versions of least squares problems like weighed least squares and penalized least squares problems. For example, GLS can be interpreted as the solution of weighted least squares problem with the weight matrix $V^{-1}$. Similarly, $\fhatstar$ can be interpreted as the solution of the following problem,
\[
\underset{\abold\in\mathbb{R}^{n^*} ,B\in\mathbb{R}^{n^*\times n}}{\mathrm{min}}\E_{\ystar,\ybold}\|\ystar-(\abold+B\ybold)\|^2_2.
\]
The proof is attached in in Appendix \ref{$tAIC$ derivation appendix}.
These alternative interpretations are free from normality assumption and therefore $Loss(Opt_{t})$ can be suitable for them. Since many predictors can be interpreted in different ways, then the assignation of predictors to $tAI$ or to $Loss(Opt_{t})$ should refer to the possibility to assume normality rather than to the predictor type.

\section{Discussion and Conclusions}
$tAI$ is an extension of the prediction error estimators that are used in $cAIC$ and $mAIC,$ extending them to estimate prediction error at interpolation and extrapolation points. As it is demonstrated in Section \ref{Use cases}, these use cases are common in various research fields, and particularly in Geostatistics and health, when GPR and LMM are used for predicting at interpolation and extrapolation points. Since GLS, linear regression and smoothing splines can be expressed as LMM \citep{brumback1999comment}, 
$tAI$ is applicable for them as well.

The correction in $tAI$ is more complicated than the corrections in $cAIC$ and $mAIC,$ which are $tr(H)/n$ and $p/n$ respectively. The correction in $tAI,$ is affected by the relations between $\Var(\ybold)$ to $\Var(\ystar),$ $\Var(\fhatbold)$ to $\Var(\fhatstar)$ and between $\Cov(\ybold,\ystar)$ to $\Var(\ybold).$ When interpreting the correction as a measure of over-fitting, the differences between the corrections gives a new perspective about how the over-fitting is composed as a function of the variance structure of the problem.

In many cases the variances parameters are unknown in advance and therefore are estimated by various procedures prior the model fitting, e.g. REML in LMM \citep{verbeke1997linear}. Estimating the variance parameters implies an extra variation for $tAI,$ especially when the sample size is small. Estimating the in-sample prediction error under the LMM setup when the variance parameters are unknown was addressed by \citep{liang2008note}. Extending this to a transductive setup is a challenge for a future work.

The numerical analyses emphasize the practical importance in using $tAI$ in scenarios where $\{X^*,Z^*,R^*\}\neq\{X,Z,R\}$ are different. It is noticeable
especially when predicting at extrapolation points, since in this case the differences between $\Var(\ybold)$ to $\Var(\ystar)$ and between $\Var(\fhatbold)$ to $\Var(\fhatstar)$ can be large.

$Loss(Opt_{t})$ is another prediction error estimator for cases involving predicting at interpolation and extrapolation points. Unlike $tAI,$ $Loss(Opt_{t})$ doesn't assume that the observations are normally distributed and therefore it is also applicable in various non-parametric applications. Since many predictors that are apparently based on normal linear model can be alternatively interpreted as solutions for the generalized least squares problems, the assignation of predictors to $tAI$ or to $Loss(Opt_{t})$ should refer to the possibility to assume normality rather than to the predictor formula.

\appendix
\section{Proofs}\label{$tAIC$ derivation appendix}
\begin{proof}[Proof of Theorem \ref{Main Theorem}]
\begin{align*}
C_{tAI}=&\E_{\ybold}\left[-\frac{1}{n^*}\E_{\ystar|\ybold}l(\ystar)-\left\{-\frac{1}{n}l(\ybold)\right\}\right]\\
=&-\frac{1}{n^*}\E_{\ybold}\E_{\ystar}l(\ystar)+\frac{1}{n}\E_{\ybold}l(\ybold)\\
=&\frac{1}{2n^*}\left\{\log|R^{*}|+n^*\log(2\pi)+\E_{\ybold}\E_{\ystar}(\ystar-H^*\ybold)^{'}R^{*^{-1}}(\ystar-H^*\ybold)\right\}\\
 &-\frac{1}{2n}\left\{\log|R|+n\log(2\pi)+\E_{\ybold}(\ybold-H\ybold)^{'}R^{-1}(\ybold-H\ybold)\right\}\\
 =&\frac{1}{2}\log\left(\frac{|R^{*}|^{\frac{1}{n^*}}}{|R|^{\frac{1}{n}}}\right)\\
 &+\frac{1}{2}\left\{\frac{1}{n^*}\E_{\ybold}\E_{\ystar}(\ystar-H^{*}\ybold)^{'}R^{*^{-1}}(\ystar-H^{*}\ybold)-\frac{1}{n}\E_{\ybold}(\ybold-H\ybold)^{'}R^{-1}(\ybold-H\ybold)\right\}
\end{align*}

Since
\begin{align*}
&\E_{\ybold}\E_{\ystar}(\ystar-H^{*}\ybold)^{'}R^{*^{-1}}(\ystar-H^{*}\ybold)\\
&\qquad\qquad\qquad=tr\left\{R^{*^{-1}}\E_{\ystar}(\ystar \ytrstar)\right\}+tr\left\{H^{*'}R^{*^{-1}}H^{*}\E_{\ybold}(\ybold\ytrbold )\right\}\\ 
 & \qquad\qquad\qquad\,\,\,\,\,\,\,-2tr\left\{R^{*^{-1}}H^{*}\E_{\ybold}\E_{\ystar}(\ybold\ytrstar)\right\}\\
&\qquad\qquad\qquad=tr\left\{R^{*^{-1}}(V^{*}+\mustar\mustartr)\right\}+tr\left\{R^{*^{-1}}H^{*}(V+\mubold\mutrbold)H^{*'}\right\}\\ 
 & \qquad\qquad\qquad\,\,\,\,\,\,\,-2tr\left\{R^{*^{-1}}H^{*}(\Cov(\ybold,\ystar)+\mubold\mustartr)\right\}
\end{align*}
and
\begin{align*}
\E_{\ybold}(\ybold-H\ybold)^{'}R^{-1}(\ybold-H\ybold) =&tr\left\{R^{-1}\E_{\ybold}(\ybold\ytrbold )\right\}+tr\left\{(H^{'}R^{-1}H\E_{\ybold}(\ybold\ytrbold )\right\}\\
&-2tr\left\{R^{-1}H\E_{\ybold}(\ybold\ytrbold )\right\}\\
=&tr\left\{R^{-1}(V+\mubold\mutrbold)\right\}+tr\left\{R^{-1}H(V+\mubold\mutrbold)H^{'}\right\}
\\
&-2tr\left\{R^{-1}H(V+\mubold\mutrbold)\right\}
\end{align*}
then
\begin{align*}
 &\frac{1}{n^*}\E_{\ybold}\E_{\ystar}(\ystar-H^{*}\ybold)^{'}R^{*^{-1}}(\ystar-H^{*}\ybold)-\frac{1}{n}\E_{\ybold}(\ybold-H\ybold)^{'}R^{-1}(\ybold-H\ybold)\\
  &=\frac{2}{n}tr\left(R^{-1}HV\right)-\frac{2}{n^*}tr\left(R^{*^{-1}}H^{*}\Cov(\ybold,\ystar)\right)\\
 &\,\,\,\,\,\,\,+\frac{1}{n^*}tr\left(R^{*^{-1}}V^{*}\right)-\frac{1}{n}tr\left(R^{-1}V\right)+\frac{1}{n^*}tr\left(R^{*^{-1}}H^{*}VH^{*'}\right)-\frac{1}{n}tr\left(R^{-1}HVH^{'}\right)\\
 &\,\,\,\,\,\,\,+\frac{1}{n}tr\left(R^{-1}(2H\mubold\mutrbold-\mubold\mutrbold-H\mubold\mutrbold H^{'})\right)\\
&\,\,\,\,\,\,\, -\frac{1}{n^*}tr\left(R^{*^{-1}}(2H^{*}\mubold\mustartr-\mustar\mustartr-H^{*}\mubold\mutrbold H^{*'})\right)
\end{align*}

Since $H\mubold=\mubold$ 
\begin{align*}
H\mubold\mutrbold & =\mubold\mutrbold\\
H\mubold\mutrbold H^{'} & =\mubold\mutrbold
\end{align*}
and similarly since $H^{*}\mubold=\mustar$ 
\begin{align*}
H^{*}\mubold\mustartr & =\mustar\mustartr\\
H^{*}\mubold\mutrbold H^{*'} & =\mustar\mustartr
\end{align*}
then
\begin{align*}
\frac{1}{n}tr\left(R^{-1}(2H\mubold\mutrbold-\mubold\mutrbold-H\mubold\mutrbold H^{'})\right)&=0\\
\frac{1}{n^*}tr\left(R^{*^{-1}}(2H^{*}\mubold\mustartr-\mustar\mustartr-H^{*}\mubold\mutrbold H^{*'})\right)&=0
\end{align*}
and therefore
\begin{align*}
 & \frac{1}{n^*}\E_{\ybold}\E_{\ystar}(\ystar-H^{*}\ybold)^{'}R^{*^{-1}}(\ystar-H^{*}\ybold)-\frac{1}{n}\E_{\ybold}(\ybold-H\ybold)^{'}R^{-1}(\ybold-H\ybold))\\
 & =\frac{2}{n}tr\left(R^{-1}HV\right)-\frac{2}{n^*}tr\left(R^{*^{-1}}H^{*}\Cov(\ybold,\ystar)\right)\\
 &\,\,\,\,\,\,\,+\frac{1}{n^*}tr\left(R^{*^{-1}}V^{*}\right)-\frac{1}{n}tr\left(R^{-1}V\right)+\frac{1}{n^*}tr\left(R^{*^{-1}}H^{*}VH^{*'}\right)-\frac{1}{n}tr\left(R^{-1}HVH^{'}\right)
\end{align*}
which gives
\begin{align*}
C_{tAI} =&\frac{1}{n}tr\left(R^{-1}HV\right)-\frac{1}{n^*}tr\left(R^{*^{-1}}H^{*}\Cov(\ybold,\ystar)\right)\\
 & +\frac{1}{2}\left\{\log\left(\frac{|R^{*}|^{\frac{1}{n^*}}}{|R|^{\frac{1}{n}}}\right)+\frac{1}{n^*}tr\left(R^{*^{-1}}V^{*}\right)-\frac{1}{n}tr\left(R^{-1}V\right)\right\}\\
 & +\frac{1}{2}\left\{\frac{1}{n^*}tr\left(R^{*^{-1}}H^{*}VH^{*'}\right)-\frac{1}{n}tr\left(R^{-1}HVH^{'}\right)\right\}
\end{align*}
\end{proof}

\begin{proof}[Proof of Lemma \ref{Optimism lemma}]
Under some regularity conditions
\begin{align*}
\frac{\partial \E_{\ybold}\E_{\ystar}\lVert \ystar-(\abold+B\ybold)\rVert_{2}^{2}}{\partial \abold} & =\frac{\E_{\ybold}\E_{\ystar}\partial\lVert \ystar-(\abold+B\ybold)\rVert_{2}^{2}}{\partial \abold}\\
\frac{\partial \E_{\ybold}\E_{\ystar}\lVert \ystar-(\abold+B\ybold)\rVert_{2}^{2}}{\partial B} & =\frac{\E_{\ybold}\E_{\ystar}\partial\lVert \ystar-(\abold+B\ybold)\rVert_{2}^{2}}{\partial B}.
\end{align*}

Since 
\begin{align*}
\frac{\partial\rVert \ystar-(\abold+B\ybold)\lVert_{2}^{2}}{\partial \abold} & =\frac{\partial\left(-\ytrstar \abold-\atrbold\ystar+\atrbold\abold+\atrbold B\ybold+\ytrbold B^{'}\abold\right)}{\partial \abold}\\
 & =-2\ystar+2\abold+2B\ybold,
\end{align*}
then
\[
\frac{\partial \E_{\ybold}\E_{\ystar}\rVert \ystar-(\abold+B\ybold)\lVert_{2}^{2}}{\partial \abold}=-2\mustar+2\abold+2B\mu.
\]

Similarly 
\begin{align*}
\frac{\partial\rVert \ystar-(\abold+B\ybold)\lVert_{2}^{2}}{\partial B} & =\frac{\partial\left(-\ytrstar B\ybold+\atrbold B\ybold-\ytrbold B^{'}\ystar+\ytrbold B^{'}\abold+\ytrbold B^{'}B\ybold\right)}{\partial B}\\
 & =-2\ystar \ytrbold +2\abold\ytrbold +2B\ybold\ytrbold 
\end{align*}
and therefore
\[
\frac{\partial \E_{\ybold}\E_{\ystar}\rVert \ystar-(\abold+B\ybold)\lVert_{2}^{2}}{\partial B}=-2\left(\Cov(\ystar,\ybold)+\mustartr\mubold\right)+2\abold\mutrbold+2B\left(V+\mubold\mutrbold\right).
\]

Since the optimized function is convex, the solution of the following equations achieves the global minimum where
\begin{align*}
0=&-\mustar+\abold+B\mubold\\
0=&-\Cov(\ystar,\ybold)-\mustar\mutrbold+\abold\mutrbold+B\left(V+\mubold\mutrbold\right).
\end{align*}

The solution for $B$ is: 
\begin{align*}
B\left(V+\mubold\mutrbold\right)& =\Cov(\ystar,\ybold)+\mustar\mutrbold-\abold\mutrbold\\
 & =\Cov(\ystar,\ybold)+\mustar\mutrbold+\left(\mustar-B\mu\right)\mutrbold\\
 & =\Cov(\ystar,\ybold)+B\mubold\mutrbold,
\end{align*}
which gives
\[
B=\Cov(\ystar,\ybold)V^{-1}.
\]
The solution for $\abold$ is
\begin{align*}
\abold & =\mustar-B\mubold\\
 & =\mustar-\Cov(\ystar,\ybold)V^{-1}\mubold.
\end{align*}
Therefore the optimal linear equation is the same as $\fhatstar.$
\end{proof}

\section{Scenarios in  mixed model where $R\neq \sigma^2_{\epsilon}I$}\label{GLS appendix}
\begin{example}
Consider the following model
\[
\ybold =X\betabold+Z_{1}\bonebold+Z_{2}\btwobold+\epsilonbold,
\]
where $\epsilonbold\sim N(0,\sigma^{2}_{\epsilon}I),\,\bonebold\sim N(0,G_{1}),\,\btwobold\sim N(0,G_{2}),$
\[
\ystar =X\betabold+Z_{1}\bonebold+Z_{2}^{*}\btwostarbold+\epsilonstarbold
\]
where $\epsilonstarbold\sim N(0,\sigma^{2}_{\epsilon}I),\,\btwostarbold\sim N(0,G_{2}),$ 
\[
\epsilonbold\perp\epsilonstarbold\perp \bonebold\perp \btwobold\perp \btwostarbold,
\]
and $Z_{1}\in\mathbb{R}^{nXq_{1}},\,Z_{2}\in\mathbb{R}^{nXq_{2}},\,Z_{2}^*\in\mathbb{R}^{nXq_{2}}.$

Since $\bonebold$ is common for $\ybold$ and $\ystar,$ its estimate can be utilized for achieving a better accuracy in predicting $\ystar.$ Since $\ystar$ doesn't contain $\btwobold$ and $\epsilonbold$, estimating them doesn't contribute achieving a better accuracy in predicting $\ystar.$ Therefore, in terms of predicting $\ystar$ using $\fhatstar$, the following model definition has the same predicting formula as the previous one, 
\[
\ybold  =X\betabold+Z_{1}\bonebold+\epsilonbold,
\]
where $\epsilonbold\sim N(0,\sigma^{2}_{\epsilon}I+Z_{2}G_{2}Z_{2}^{'}),\,\bonebold\sim N(0,G_{1}),$
\[
\ystar =X\betabold+Z_{1}\bonebold+\epsilonstarbold,
\]
where $\epsilonstarbold\sim N(0,\sigma^{2}_{\epsilon}I+Z_{2}^{*}G_{2}Z_{2}^{*'}),$ 
\[
\epsilonbold\perp\epsilonstarbold\perp \bonebold.
\]
Since the second formulation is simpler it can be preferred when the goal is predicting $\ystar.$
\end{example}
\begin{example} Consider the standard LMM setup when $\ybold\in\mathbb{R}^n$ is drawn from $K$ clusters, however each observation, $y_i,$ is an average of $w_{i}$ i.i.d observations, $y_i=(\sum_{l=1}^{w_i}\phi_{i,l})/w_i,$ where $\phi_{i,l}\sim N(0,\sigma^{2}_{\epsilon}).$ Assume $\phi_{i,l}$ are unknown, however $w_i$ is known. The variance of the residual, in this case is 
\[
\left(\begin{array}{cccc}
\frac{\sigma^{2}_{\epsilon}}{w_{1}} & 0 & \cdots & 0\\
0 & \frac{\sigma^{2}_{\epsilon}}{w_{2}} &  & \vdots\\
\vdots &  & \ddots & 0\\
0 & \cdots & 0 & \frac{\sigma^{2}_{\epsilon}}{w_{n}}
\end{array}\right).
\]
\end{example}

Another common use case is when due to poor available data, technical restrictions or other reasons, part of the correlation of $\ybold$ is not explained by the random effects. In those cases, this part will be expressed by the residual, $\epsilonbold,$ and therefore $\Var(\epsilonbold)$ will be a non-diagonal matrix.

\section{Additional Numerical Results}\label{Additional Numeriocal Results}
Figures \ref{Density1} and \ref{Density2} present the distributions of $tAI,\,cAI,\,mAI$ and $-\E_{\ystar|\ybold}l(\ystar)/n^{*}$ for models $1$ and $2$. For more details, see Section \ref{Numerical part}.

\begin{figure}[h!]
 \centering
    \begin{subfigure}[b]{0.5\textwidth}
        \centering
      \includegraphics[width=1\linewidth]{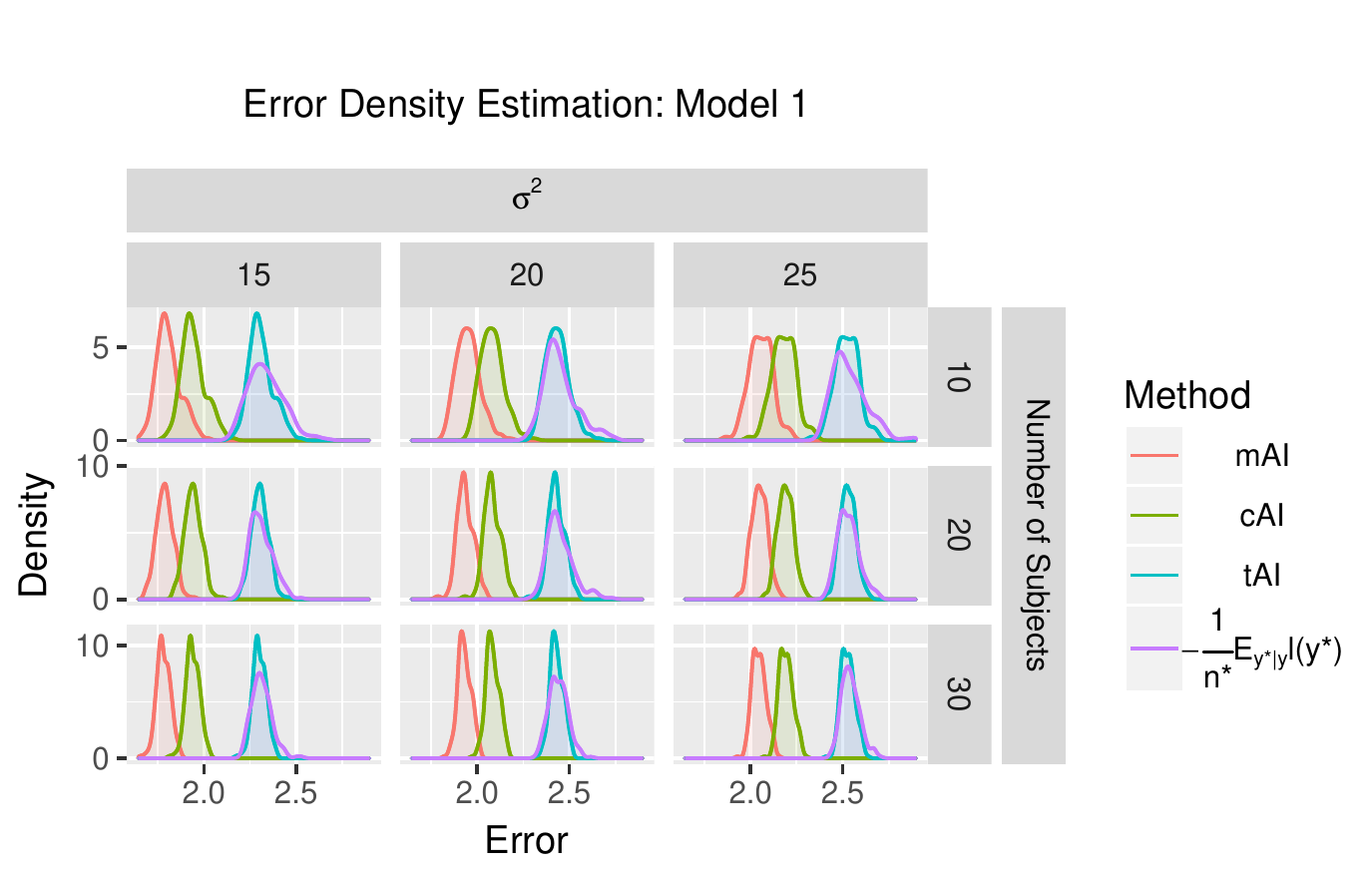}
          \caption{Density: Model 1}
            \label{Density1}
            \end{subfigure}%
    ~ 
    \begin{subfigure}[b]{0.5\textwidth}
        \centering
          \includegraphics[width=1\linewidth]{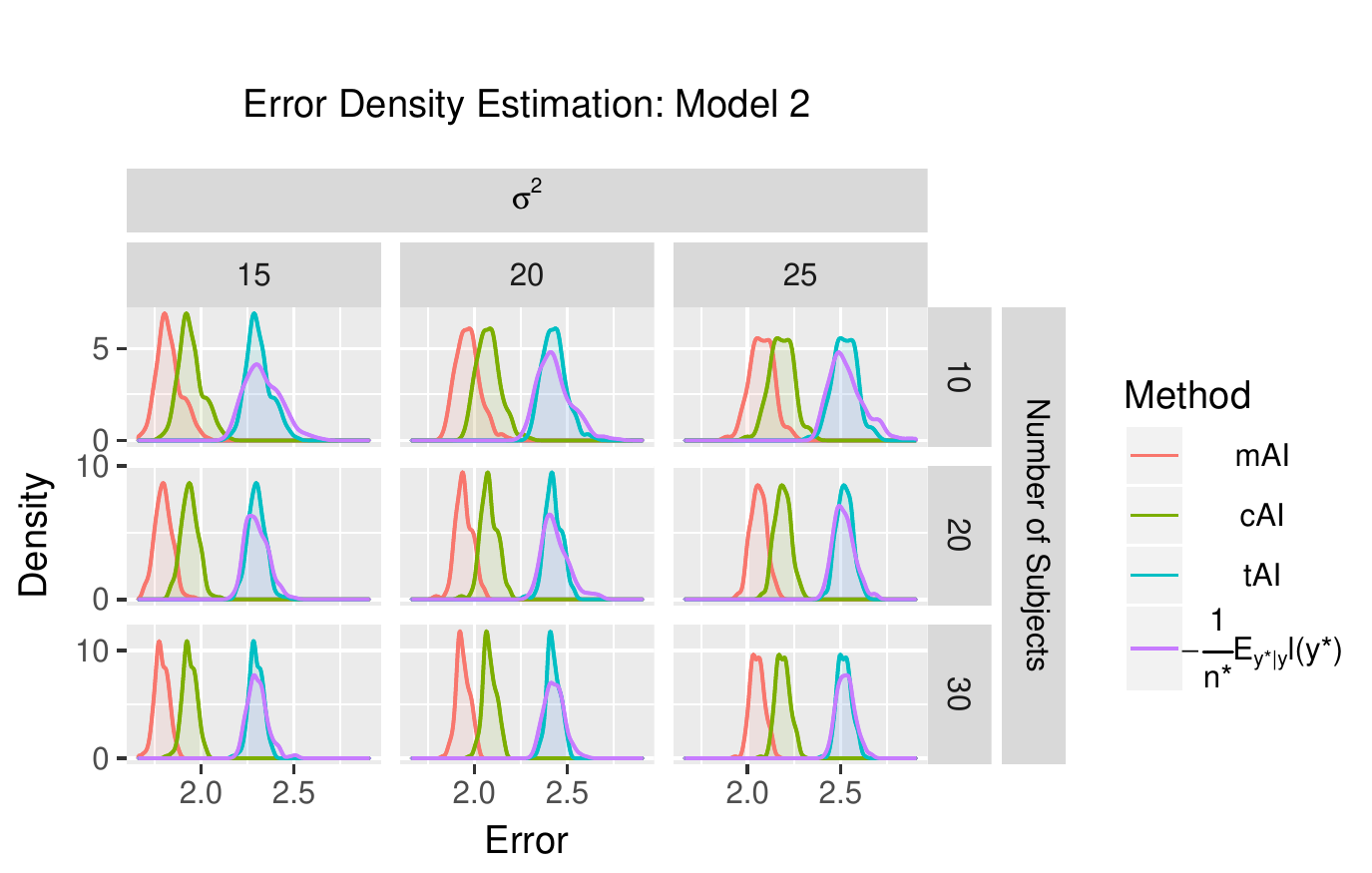}
          \caption{Density: Model 2}
            \label{Density2}
      \end{subfigure}
    \caption{Densities of  $tAI,\,cAI,\,mAI$ and $-\E_{\ystar|\ybold}l(\ystar)/n^{*}$ as a function of the sample size and $\sigma^2$.}
    \label{Density12}
\end{figure}

Figure \ref{Average Argmin EE} presents the error 
\[
\E_{\ybold}\E_{\ystar|\ybold}-\frac{1}{n^{*}}l_{h_{best}}(\ystar)
\]
for each one of the model selection criteria, $tAIC,\,cAIC\,mAIC$ and the oracle criterion
\begin{align*}
h_{best}&=\underset{h\in\{1,2,3\}}{\argmin}-\frac{1}{n^{*}}\E_{\ybold}\E_{\ystar|\ybold}l_h(\ystar)
\end{align*}
in the nine setups. For more details see Section \ref{Numerical part}.
\begin{figure}[h!]
\begin{centering}
\includegraphics[width=1\linewidth]{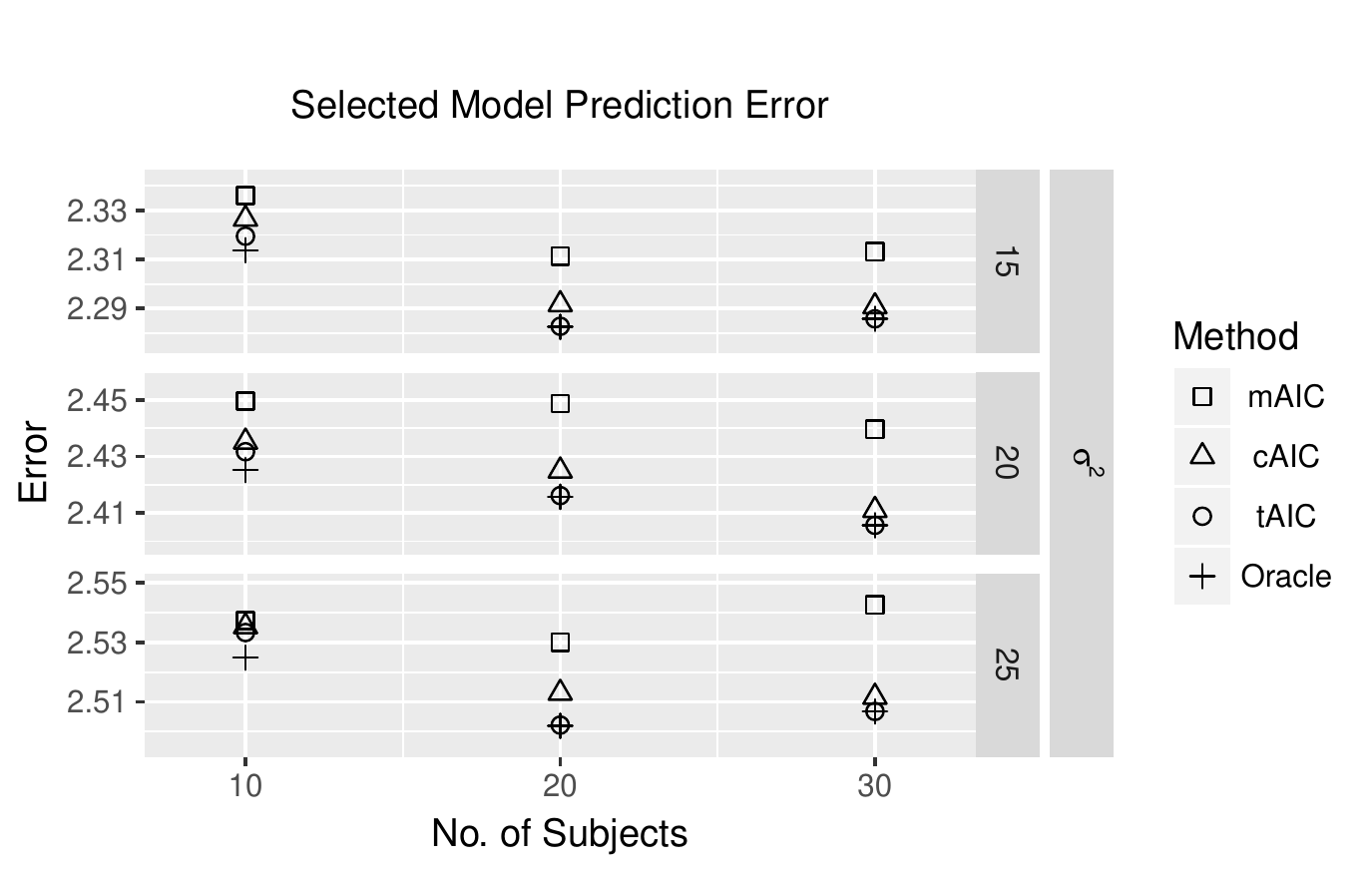}
\caption{For each setup, each symbol refers to the prediction error $\E_{\ystar|\ybold}-l_{h_{best}}(\ystar)/n^{*}$ of the relevant criterion, $mAIC,\,cAIC\,tAIC$ and the oracle criterion.}
\label{Average Argmin EE}
\end{centering}
\end{figure}

Figure \ref{Agreement Rate EE} presents the agreement rate of the criteria, $tAIC,\,cAIC$ and $mAIC$ with the oracle criterion 
\begin{align*}
h_{best}&=\underset{h\in\{1,2,3\}}{\argmin}-\frac{1}{n^{*}}\E_{\ybold}\E_{\ystar|\ybold}l_h(\ystar).
\end{align*}
For more details see Section \ref{Numerical part}.
\begin{figure}[h!]
\begin{centering}
\includegraphics[width=1\linewidth]{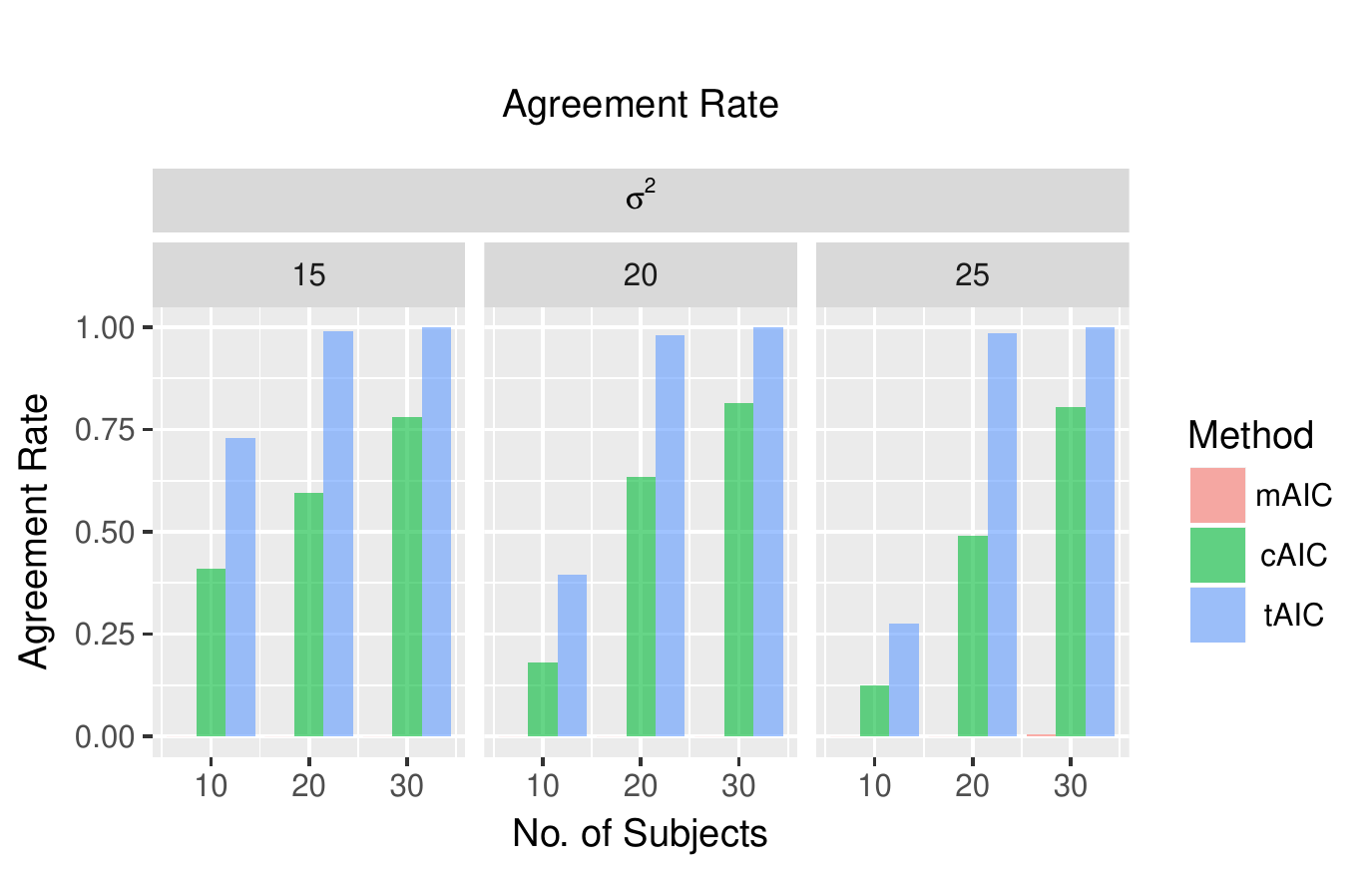}
\caption{For each setup, each bar refers to the agreement rate of the relevant criterion with the oracle criterion}
\label{Agreement Rate EE}
\end{centering}
\end{figure}

\bibliography{bibliography.bib}

\begin{thebibliography}{}

\bibitem[\protect\citeauthoryear{Akaike}{Akaike}{1974}]{akaike1974new}
Akaike, H. (1974).
\newblock A new look at the statistical model identification.
\newblock {\em IEEE transactions on automatic control\/}~{\em 19\/}(6),
  716--723.

\bibitem[\protect\citeauthoryear{Baltagi}{Baltagi}{2008}]{baltagi2008forecasting}
Baltagi, B.~H. (2008).
\newblock Forecasting with panel data.
\newblock {\em Journal of Forecasting\/}~{\em 27\/}(2), 153--173.

\bibitem[\protect\citeauthoryear{Brown and Comrie}{Brown and
  Comrie}{2002}]{brown2002spatial}
Brown, D.~P. and A.~C. Comrie (2002).
\newblock Spatial modeling of winter temperature and precipitation in arizona
  and new mexico, usa.
\newblock {\em Climate Research\/}~{\em 22\/}(2), 115--128.

\bibitem[\protect\citeauthoryear{Brumback, Ruppert, and Wand}{Brumback
  et~al.}{1999}]{brumback1999comment}
Brumback, B.~A., D.~Ruppert, and M.~P. Wand (1999).
\newblock Comment.
\newblock {\em Journal of the American Statistical Association\/}~{\em
  94\/}(447), 794--797.

\bibitem[\protect\citeauthoryear{Efron}{Efron}{1986}]{efron1986biased}
Efron, B. (1986).
\newblock How biased is the apparent error rate of a prediction rule?
\newblock {\em Journal of the American statistical Association\/}~{\em
  81\/}(394), 461--470.

\bibitem[\protect\citeauthoryear{Harville}{Harville}{1976}]{harville1976extension}
Harville, D. (1976).
\newblock Extension of the gauss-markov theorem to include the estimation of
  random effects.
\newblock {\em The Annals of Statistics\/}, 384--395.

\bibitem[\protect\citeauthoryear{Ho, Andreasen, Ziebell, Pierson, and
  Magnotta}{Ho et~al.}{2011}]{ho2011long}
Ho, B.-C., N.~C. Andreasen, S.~Ziebell, R.~Pierson, and V.~Magnotta (2011).
\newblock Long-term antipsychotic treatment and brain volumes: a longitudinal
  study of first-episode schizophrenia.
\newblock {\em Archives of general psychiatry\/}~{\em 68\/}(2), 128--137.

\bibitem[\protect\citeauthoryear{Hodges and Sargent}{Hodges and
  Sargent}{2001}]{hodges2001counting}
Hodges, J.~S. and D.~J. Sargent (2001).
\newblock Counting degrees of freedom in hierarchical and other
  richly-parameterised models.
\newblock {\em Biometrika\/}~{\em 88\/}(2), 367--379.

\bibitem[\protect\citeauthoryear{Hogan, Roy, and Korkontzelou}{Hogan
  et~al.}{2004}]{hogan2004handling}
Hogan, J.~W., J.~Roy, and C.~Korkontzelou (2004).
\newblock Handling drop-out in longitudinal studies.
\newblock {\em Statistics in medicine\/}~{\em 23\/}(9), 1455--1497.

\bibitem[\protect\citeauthoryear{Joachims}{Joachims}{1999}]{joachims1999transductive}
Joachims, T. (1999).
\newblock Transductive inference for text classification using support vector
  machines.
\newblock In {\em ICML}, Volume~99, pp.\  200--209.

\bibitem[\protect\citeauthoryear{Johnson, Wichern, et~al.}{Johnson
  et~al.}{2014}]{johnson2014applied}
Johnson, R.~A., D.~W. Wichern, et~al. (2014).
\newblock {\em Applied multivariate statistical analysis}, Volume~4.
\newblock Prentice-Hall New Jersey.

\bibitem[\protect\citeauthoryear{Kyriakidis and Journel}{Kyriakidis and
  Journel}{1999}]{kyriakidis1999geostatistical}
Kyriakidis, P.~C. and A.~G. Journel (1999).
\newblock Geostatistical space--time models: a review.
\newblock {\em Mathematical geology\/}~{\em 31\/}(6), 651--684.

\bibitem[\protect\citeauthoryear{Le, Smola, G{\"a}rtner, and Altun}{Le
  et~al.}{2006}]{le2006transductive}
Le, Q.~V., A.~J. Smola, T.~G{\"a}rtner, and Y.~Altun (2006).
\newblock Transductive gaussian process regression with automatic model
  selection.
\newblock In {\em ECML}, pp.\  306--317. Springer.

\bibitem[\protect\citeauthoryear{Li, Zhang, Wang, Gregg, Yang, Gong, Li, Li,
  Jiang, An, et~al.}{Li et~al.}{2008}]{li2008long}
Li, G., P.~Zhang, J.~Wang, E.~W. Gregg, W.~Yang, Q.~Gong, H.~Li, H.~Li,
  Y.~Jiang, Y.~An, et~al. (2008).
\newblock The long-term effect of lifestyle interventions to prevent diabetes
  in the china da qing diabetes prevention study: a 20-year follow-up study.
\newblock {\em The Lancet\/}~{\em 371\/}(9626), 1783--1789.

\bibitem[\protect\citeauthoryear{Li and Heap}{Li and Heap}{2011}]{li2011review}
Li, J. and A.~D. Heap (2011).
\newblock A review of comparative studies of spatial interpolation methods in
  environmental sciences: Performance and impact factors.
\newblock {\em Ecological Informatics\/}~{\em 6\/}(3), 228--241.

\bibitem[\protect\citeauthoryear{Li and Heap}{Li and
  Heap}{2014}]{li2014spatial}
Li, J. and A.~D. Heap (2014).
\newblock Spatial interpolation methods applied in the environmental sciences:
  A review.
\newblock {\em Environmental Modelling \& Software\/}~{\em 53}, 173--189.

\bibitem[\protect\citeauthoryear{Liang, Wu, and Zou}{Liang
  et~al.}{2008}]{liang2008note}
Liang, H., H.~Wu, and G.~Zou (2008).
\newblock A note on conditional aic for linear mixed-effects models.
\newblock {\em Biometrika\/}~{\em 95\/}(3), 773--778.

\bibitem[\protect\citeauthoryear{Mahalanobis}{Mahalanobis}{1936}]{mahalanobis1936generalised}
Mahalanobis, P.~C. (1936).
\newblock On the generalised distance in statistics.
\newblock {\em Proceedings of the National Institute of Sciences of India,
  1936\/}, 49--55.

\bibitem[\protect\citeauthoryear{Mallinckrodt, Sanger, Dub{\'e}, DeBrota,
  Molenberghs, Carroll, Potter, and Tollefson}{Mallinckrodt
  et~al.}{2003}]{mallinckrodt2003assessing}
Mallinckrodt, C.~H., T.~M. Sanger, S.~Dub{\'e}, D.~J. DeBrota, G.~Molenberghs,
  R.~J. Carroll, W.~Z. Potter, and G.~D. Tollefson (2003).
\newblock Assessing and interpreting treatment effects in longitudinal clinical
  trials with missing data.
\newblock {\em Biological psychiatry\/}~{\em 53\/}(8), 754--760.

\bibitem[\protect\citeauthoryear{Manton, Singer, and Suzman}{Manton
  et~al.}{2012}]{manton2012forecasting}
Manton, K.~G., B.~Singer, and R.~M. Suzman (2012).
\newblock {\em Forecasting the health of elderly populations}.
\newblock Springer Science \& Business Media.

\bibitem[\protect\citeauthoryear{Murray, Varnell, and Blitstein}{Murray
  et~al.}{2004}]{murray2004design}
Murray, D.~M., S.~P. Varnell, and J.~L. Blitstein (2004).
\newblock Design and analysis of group-randomized trials: a review of recent
  methodological developments.
\newblock {\em American journal of public health\/}~{\em 94\/}(3), 423--432.

\bibitem[\protect\citeauthoryear{O'neill and Temple}{O'neill and
  Temple}{2012}]{o2012prevention}
O'neill, R. and R.~Temple (2012).
\newblock The prevention and treatment of missing data in clinical trials: an
  fda perspective on the importance of dealing with it.
\newblock {\em Clinical Pharmacology \& Therapeutics\/}~{\em 91\/}(3),
  550--554.

\bibitem[\protect\citeauthoryear{Pope~III, Burnett, Thun, Calle, Krewski, Ito,
  and Thurston}{Pope~III et~al.}{2002}]{pope2002lung}
Pope~III, C.~A., R.~T. Burnett, M.~J. Thun, E.~E. Calle, D.~Krewski, K.~Ito,
  and G.~D. Thurston (2002).
\newblock Lung cancer, cardiopulmonary mortality, and long-term exposure to
  fine particulate air pollution.
\newblock {\em Jama\/}~{\em 287\/}(9), 1132--1141.

\bibitem[\protect\citeauthoryear{Potthoff and Roy}{Potthoff and
  Roy}{1964}]{potthoff1964generalized}
Potthoff, R.~F. and S.~Roy (1964).
\newblock A generalized multivariate analysis of variance model useful
  especially for growth curve problems.
\newblock {\em Biometrika\/}~{\em 51\/}(3-4), 313--326.

\bibitem[\protect\citeauthoryear{Rikken and Van~Rijn}{Rikken and
  Van~Rijn}{1993}]{rikken1993soil}
Rikken, M. and R.~Van~Rijn (1993).
\newblock {\em Soil Pollution with Heavy Metals: In Inquiry Into Spatial
  Variation, Cost of Mapping and the Risk Evaluation of Copper, Cadmium, Lead
  and Zinc in the Floodplains of the Meuse West of Stein, The Netherlands:
  Field Study Report}.
\newblock University of Utrecht.

\bibitem[\protect\citeauthoryear{Salimi-Khorshidi, Nichols, Smith, and
  Woolrich}{Salimi-Khorshidi et~al.}{2011}]{salimi2011using}
Salimi-Khorshidi, G., T.~E. Nichols, S.~M. Smith, and M.~W. Woolrich (2011).
\newblock Using gaussian-process regression for meta-analytic neuroimaging
  inference based on sparse observations.
\newblock {\em IEEE transactions on medical imaging\/}~{\em 30\/}(7),
  1401--1416.

\bibitem[\protect\citeauthoryear{Stahl, Moore, Floyer, Asplin, and
  McKendry}{Stahl et~al.}{2006}]{stahl2006comparison}
Stahl, K., R.~Moore, J.~Floyer, M.~Asplin, and I.~McKendry (2006).
\newblock Comparison of approaches for spatial interpolation of daily air
  temperature in a large region with complex topography and highly variable
  station density.
\newblock {\em Agricultural and Forest Meteorology\/}~{\em 139\/}(3), 224--236.

\bibitem[\protect\citeauthoryear{Stewart, Cutler, and Rosen}{Stewart
  et~al.}{2009}]{stewart2009forecasting}
Stewart, S.~T., D.~M. Cutler, and A.~B. Rosen (2009).
\newblock Forecasting the effects of obesity and smoking on us life expectancy.
\newblock {\em New England Journal of Medicine\/}~{\em 361\/}(23), 2252--2260.

\bibitem[\protect\citeauthoryear{Tsanas, Little, McSharry, and Ramig}{Tsanas
  et~al.}{2010}]{tsanas2010accurate}
Tsanas, A., M.~A. Little, P.~E. McSharry, and L.~O. Ramig (2010).
\newblock Accurate telemonitoring of parkinson's disease progression by
  noninvasive speech tests.
\newblock {\em IEEE transactions on Biomedical Engineering\/}~{\em 57\/}(4),
  884--893.

\bibitem[\protect\citeauthoryear{Vaida and Blanchard}{Vaida and
  Blanchard}{2005}]{vaida2005conditional}
Vaida, F. and S.~Blanchard (2005).
\newblock Conditional akaike information for mixed-effects models.
\newblock {\em Biometrika\/}~{\em 92\/}(2), 351--370.

\bibitem[\protect\citeauthoryear{Verbeke}{Verbeke}{1997}]{verbeke1997linear}
Verbeke, G. (1997).
\newblock Linear mixed models for longitudinal data.
\newblock In {\em Linear mixed models in practice}, pp.\  63--153. Springer.

\bibitem[\protect\citeauthoryear{Vicente-Serrano, Saz-S{\'a}nchez, and
  Cuadrat}{Vicente-Serrano et~al.}{2003}]{vicente2003comparative}
Vicente-Serrano, S.~M., M.~A. Saz-S{\'a}nchez, and J.~M. Cuadrat (2003).
\newblock Comparative analysis of interpolation methods in the middle ebro
  valley (spain): application to annual precipitation and temperature.
\newblock {\em Climate research\/}~{\em 24\/}(2), 161--180.

\bibitem[\protect\citeauthoryear{Wilks}{Wilks}{1932}]{wilks1932certain}
Wilks, S.~S. (1932).
\newblock Certain generalizations in the analysis of variance.
\newblock {\em Biometrika\/}, 471--494.

\bibitem[\protect\citeauthoryear{Wood, White, and Thompson}{Wood
  et~al.}{2004}]{wood2004missing}
Wood, A.~M., I.~R. White, and S.~G. Thompson (2004).
\newblock Are missing outcome data adequately handled? a review of published
  randomized controlled trials in major medical journals.
\newblock {\em Clinical trials\/}~{\em 1\/}(4), 368--376.

\bibitem[\protect\citeauthoryear{Wray, Yang, Hayes, Price, Goddard, and
  Visscher}{Wray et~al.}{2013}]{wray2013pitfalls}
Wray, N.~R., J.~Yang, B.~J. Hayes, A.~L. Price, M.~E. Goddard, and P.~M.
  Visscher (2013).
\newblock Pitfalls of predicting complex traits from snps.
\newblock {\em Nature Reviews Genetics\/}~{\em 14\/}(7), 507--515.

\bibitem[\protect\citeauthoryear{Zhang, Ersoz, Lai, Todhunter, Tiwari, Gore,
  Bradbury, Yu, Arnett, Ordovas, et~al.}{Zhang et~al.}{2010}]{zhang2010mixed}
Zhang, Z., E.~Ersoz, C.-Q. Lai, R.~J. Todhunter, H.~K. Tiwari, M.~A. Gore,
  P.~J. Bradbury, J.~Yu, D.~K. Arnett, J.~M. Ordovas, et~al. (2010).
\newblock Mixed linear model approach adapted for genome-wide association
  studies.
\newblock {\em Nature genetics\/}~{\em 42\/}(4), 355--360.

\end{thebibliography}
\bibliographystyle{Chicago}



\end{document}